\newtheorem{thm}{Theorem}
\newtheorem{lem}[thm]{Lemma}
\theoremstyle{definition}
\newtheorem{defn}[thm]{Definition}
\theoremstyle{remark}
\newtheorem{rem}[thm]{Remark}
\numberwithin{equation}{section} 
\numberwithin{thm}{section}
\newcommand{\infspec}{{\rm inf\ spec\ }}
\title{Ground State Energy of Dilute Bose Gas in Small Negative Potential Case}
\author{Ji Oon Lee}
\address{Department of Mathematics, Harvard University, Cambridge, MA
02138, USA}
\email{jioon@math.harvard.edu}
\date{\today}
\begin{document}

\begin{abstract}
It is well known that the ground state energy of a three dimensional dilute Bose gas in the thermodynamic limit is $E=4\pi a \rho N$ when the particles interact via a non-negative, finite range, spherically symmetric, two-body potential. Here, $N$ is the number of particles, $\rho$ is the density of the gas, and $a$ is the scattering length of the potential. In this paper, we prove the same result without the non-negativity condition on the potential, provided the negative part is small.
\end{abstract}

\maketitle \thispagestyle{empty} \thispagestyle{headings}

\section{Introduction}

The ground state of a dilute Bose gas yields many interesting results. Dyson's estimate \cite{D} of the ground state energy was the first rigorous one, and about 40 years later, Lieb and Yngvason \cite{LY1} proved the correct leading term for the lower bound. Similar results for two-dimensional case were also proved subsequently \cite{LY2}.

All these works, however, were done under the assumption of non-negative interaction potential. In this paper, we relax this condition by permitting a small negative component in the potential. To see this more precisely, we introduce the following model.

We consider a system of $N$ interacting three-dimensional Bosons in a three dimensional torus $\mathbb{T}$ of side length $L$. Given the two-particle interaction $V$, the Hamiltonian of this system is
\begin{equation} \label{eqn0}
H = -\sum_{j=1}^N \Delta_j + \sum_{i<j}^N V(x_i - x_j)
\end{equation}
where $x_i \in \mathbb{T}$ are the positions of the particles and $\Delta_i$ denotes the Laplacian with respect to $i$-th particle. Every Bosonic state function in this paper is symmetric, smooth, and $L^2$-normalized. The potential $V$ is spherically symmetric, continuous, and compactly supported.

The ground state energy is defined as follows. For simplicity, we let $X_n=(x_1, \cdots, x_n)$ and $dX_n = dx_1 \cdots dx_n$.
\begin{defn}[The ground state energy]
For given Hamiltonian
\begin{equation}
H^{(n)}= -\sum_{j=1}^n \Delta_j + \sum_{i<j}^n V(x_i -x_j), \label{h}
\end{equation}
a) In a bounded set $S \subset \mathbb{R}^3$, its ground state energy, $E(n, S, V)$, with Neumann boundary conditions is defined to be
\begin{equation}
E(n, S, V) = \inf_{\|\psi\|_2 =1} \Big[ \sum_{j=1}^n \int_{S^n} dX_n |\nabla_j \psi(X_n)|^2 + \sum_{i<j}^n \int_{S^n} dX_n V(x_i - x_j) |\psi(X_n)|^2 \Big].
\end{equation}
b) In a three-dimensional torus of side length $s$, $\mathbb{T}_s = \mathbb{R}^3 /(s \mathbb{Z})^3$, its ground state energy $E(n, \mathbb{T}_s, V)$ is defined to be
\begin{equation}
E(n, \mathbb{T}_s, V) = \inf_{\|\psi\|_2 =1} \Big[ \sum_{j=1}^n \int_{\mathbb{T}_s^n} dX_n |\nabla_j \psi(X_n)|^2 + \sum_{i<j}^n \int_{\mathbb{T}_s^n} dX_n V(x_i - x_j) |\psi(X_n)|^2 \Big].
\end{equation}
\end{defn}

We want to find the ground state energy of (\ref{eqn0}) in $\mathbb{T} = \mathbb{T}_L$ in the thermodynamic limit, where $N$ and $L$ approach infinity with $\rho=N/L^3$ fixed. When $V$ is non-negative, it is known that the leading term of the ground state energy in $\Lambda$, a box of side length $L$, is $4 \pi a \rho N$ as $\rho \to 0$ \cite{LSSY}. To prove it, $\Lambda$ is divided into small boxes. Each smaller box, $\Lambda_{\ell}$ is of side length $\ell$ and big enough to have many particles in it. Then, the superadditivity of the ground state energy \cite{LY1},
\begin{equation}
E(n+n',\Lambda_{\ell}, V) \geq E(n,\Lambda_{\ell}, V) + E(n',\Lambda_{\ell}, V) \label{eqn1}
\end{equation}
tells that the infimum of the sum of the energies in the small boxes is attained when the particles are evenly distributed among those boxes. The superadditivity (\ref{eqn1}) results from the fact that we can neglect the interactions between $n$ particles and $n'$ particles when finding the lower bound, provided $V$ is non-negative.

Without the non-negativity, however, it is not true in general that adding particles in a box increases the energy. Moreover, it is clear that some negative potential is catastrophic in the sense that no energy lower bound exist. This fact suggests that we need to introduce some other conditions to ensure stability \cite{R}.

\begin{defn}[Stability of Potential]
A two-particle potential $V_0$ is stable if there exists $B \geq 0$ such that
\begin{equation}
\sum_{i<j}^n V_0(x_i - x_j) \geq -nB \label{stable}
\end{equation}
for all $n \geq 0$ and $x_1, x_2, \cdots, x_n \in \mathbb{R}^3$
\end{defn}

Now we can state the main theorem.

\begin{thm} \label{thm1}
Let $V_1$ and $V_2$ be non-negative, spherically symmetric, continuous, compactly supported, two-particle potentials, satisfying
\begin{equation}
V_1 (x) = 0 \;\;\;\text{if}\;\; |x|>R_0, \label{v1}
\end{equation}
\begin{equation}
V_2 (x) = 0 \;\;\;\text{if}\;\; |x|<R_0 \; \text{or} \; |x|>R_1. \label{v2}
\end{equation}
If $V_0 = V_1 - V_2$ is stable, i.e. satisfies (\ref{stable}), there exists a small positive constant $\lambda$ such that, if $V=V_1 - \lambda V_2$ and $a$ is the scattering length of $V$, then there are positive constants $C_0$ and $\epsilon$ such that the ground state energy of (\ref{eqn0}), with $N$ and $L$ large, has a lower bound
\begin{equation}
E(N, \mathbb{T}, V) \geq 4 \pi a \rho N (1-C_0 \rho^{\epsilon})
\end{equation}
as $\rho \rightarrow 0$.
\end{thm}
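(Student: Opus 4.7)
The plan is to adapt the Lieb--Yngvason box-decomposition proof of the non-negative case, with two key modifications: the Dyson lemma is generalized using the zero-energy scattering solution of $V$ itself (positive for small $\lambda$) rather than the non-negativity of $V$, and superadditivity is replaced by a direct Cauchy--Schwarz argument together with a fallback stability bound. For the generalized Dyson lemma, I observe that for $\lambda$ sufficiently small the radial equation $(-\Delta+V/2)\phi_0=0$ on $\mathbb{R}^3$ admits a strictly positive solution with $\phi_0(r)=1-a/r$ for $r>R_1$; positivity follows by continuity from the $\lambda=0$ case, where $\phi_0^{(1)}>0$ by the maximum principle applied to $V_1\ge 0$. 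The substitution $\psi=\phi_0 f$ and one integration by parts yield
\begin{equation*}
\int_{B_R}\bigl[|\nabla\psi|^2+\tfrac{1}{2}V|\psi|^2\bigr]\,dx = \int_{B_R}\phi_0^2|\nabla f|^2\,dx + \int_{\partial B_R} f^2\,\phi_0\,\partial_r\phi_0\,dS
\end{equation*}
for any $R>R_1$. Since $\phi_0>0$ the bulk term is non-negative \emph{regardless of the sign of $V$}, and the boundary term equals $4\pi a(1+O(a/R))$ times the spherical mean of $|\psi|^2$. Averaging over a small radial interval and peeling off half the kinetic energy then produces the Lieb--Yngvason-style lower bound $\tfrac{1}{2}\int|\nabla\psi|^2+\int V|\psi|^2\ge 4\pi a\int U|\psi|^2$ for an explicit soft potential $U$, with the scattering length being that of $V$ itself.

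Next I partition $\mathbb{T}_L$ into cubic sub-boxes $\Lambda_\ell^{(\alpha)}$ of side $\ell$ with $R_1\ll\ell\ll\rho^{-1/3-\epsilon}$. Cross-box interactions cannot simply be dropped, but they are controlled by the fraction of particles within $R_1$ of a face of a box, namely $O(R_1/\ell)$, giving a total error of order $(R_1/\ell)\rho N$ which is absorbable into $C_0\rho^\epsilon$ for appropriate $\ell$. Inside each sub-box, applying the generalized Dyson lemma to each pair and combining with the Temple inequality and the Neumann spectral gap on $\Lambda_\ell$ yields the per-box bound
\begin{equation*}
E(n,\Lambda_\ell,V) \ge 4\pi a\,\frac{n(n-1)}{\ell^3}(1-o(1))
\end{equation*}
for $n$ in a suitable range about the typical value $\bar n=\rho\ell^3$. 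Summing over the $K=L^3/\ell^3$ boxes and using $\sum_\alpha n_\alpha^2\ge N^2/K$ (Cauchy--Schwarz) delivers the leading term $4\pi a\rho N(1-o(1))$ without invoking superadditivity.

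The principal obstacle is extending the per-box bound uniformly in $n$, including anomalous occupation numbers, where the usual superadditivity extension is unavailable. My plan is a two-regime argument: for $n$ in a wide window around $\bar n$ the generalized Dyson and Temple computations go through unchanged using the positivity of $\phi_0$; for anomalous $n$ I fall back on the stability bound $E(n,\Lambda_\ell,V)\ge -nB$ (inherited from the stability of $V_0$ and the decomposition $V=V_0+(1-\lambda)V_2$), and argue that any distribution $\{n_\alpha\}$ placing a significant fraction of particles in anomalous boxes forces $\sum_\alpha n_\alpha^2$ to exceed $N^2/K$ by enough to compensate the extensive $-nB$ losses. Balancing all error sources---the Dyson correction, the cross-box error $R_1/\ell$, the Cauchy--Schwarz deficit $1/(\rho\ell^3)$, the Temple correction, and the anomalous-box stability loss---requires $\lambda$ small in terms of $a$ and $B$, and ultimately fixes the exponent $\epsilon$.
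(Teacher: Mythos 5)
Your architecture shares the box decomposition, Dyson-type pair bound, Temple-type perturbation, and final Cauchy--Schwarz step with the paper, and your generalized Dyson lemma built from the zero-energy scattering solution of the full $V$ (rather than applying the classical Dyson lemma only to $V_1$) is a genuine variant in the spirit of Yin's work. Two steps, however, do not close.

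First, discarding cross-box interactions is not an energy lower bound when $V$ has a negative part, and the proposed fix---bounding the loss by the fraction of particles within $R_1$ of a face, $O(R_1/\ell)$---is not available: the wavefunction may concentrate arbitrarily near the grid, so you have no a priori estimate on that fraction. The paper resolves this with the Conlon--Lieb--Yau sliding localization (proof of Theorem~\ref{thm2}): averaging over continuous translations $u$ of the grid origin converts the sum of localized Hamiltonians into a lower bound for $-\sum_j\Delta_j+\sum_{i<j}Vh_\ell$ with $h_\ell=\chi*\chi$, and $h_\ell\ge 1-\sqrt3R_1/\ell$ on the support of $V$; this is precisely where the modified coupling $c_1=(1-\sqrt3R_1/\ell)\delta$ comes from, and without some such device there is no pointwise bound in $\psi$. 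Second, the anomalous-box fallback $E(n,\Lambda_\ell,V)\ge-nB$ with $B=O(1)$ is far too weak: if a single box captures a fraction of order one of the $N$ particles, it contributes $-cNB$, which overwhelms the target $4\pi a\rho N$ as $\rho\to 0$, and the claimed compensation from $\sum_\alpha n_\alpha^2$ exceeding $N^2/K$ cannot help because that surplus is weighted by a coefficient $\sim a/\ell^3\ll B$. The paper instead proves, again via sliding localization, the much stronger non-negativity $E(n,\Lambda_{\ell'},V_1-c_1V_2)\ge 0$ for \emph{all} $n$ and all $\ell'$ (Theorem~\ref{thm2}), and treats large occupation numbers by splitting $V=\tfrac12V_1+\tfrac12(V_1-c_1V_2)$ and applying superadditivity only to the non-negative part $V_1$. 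Replacing both devices with a bare stability estimate leaves an extensive negative error that your scheme cannot absorb.
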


\begin{rem}
From the proof, one can estimate that $\epsilon < \frac{1}{31}$ and $C_0 > 9$ are sufficient for Theorem \ref{thm1}. It should also be noted that, however, the error term $C_0 \rho^{\epsilon}$ has no significant meaning. $\lambda$ is independent of $\rho$ and can be estimated from (\ref{delta}), (\ref{c1}), and (\ref{lambda}).
\end{rem}

\begin{rem}
Corresponding upper bound can be easily obtained with a proper assumption on $\lambda$. One can follow the proof of Theorem 2.2 in \cite{LSSY}. For detailed calculation, see \cite{D} and \cite{LSY}. (Remark 3.1 in \cite{LSY} explains how to include small negative potential in hard core potential, and one can prove the same result for the potential in Theorem \ref{thm1} using the similar argument in the remark.) More general upper bound calculation for a case of the interaction potential with positive scattering length can be found in \cite{Y}, which also considers the problem similar to the one in this paper, yet with a different approach.
\end{rem}

\begin{rem}
Throughout the paper, $C$ denotes various constants that do not depend on $\rho$.
\end{rem}

\vskip20pt

\section{Outline of the Proof}
\emph{Step 1: Non-negativity of the ground state energy in a small box of size $O(1)$. (Lemma \ref{lem1}.)}

We first choose the side length of the small box, $\ell$, and show that the ground state energy $E(2, S, V_1)$ is bounded below by a function of $\ell$ as in (\ref{e2}) for any rectangular box $S \subset \Lambda_{\ell}$. To control $V_2$, the negative potential, we use
\begin{equation}
E(k, S, V_1 -\delta V_2) \geq (1-\delta) E(k, S, V_1) + \delta E(k, S, V_1 - V_2).
\end{equation}
We have $E(k, S, V_1 - V_2 ) \geq -Bk$ due to the stability of the potential, and
\begin{equation}
E(k, S, V_1) \geq \lfloor \frac{k}{2} \rfloor E(2, S, V_1) \geq \frac{k}{3} E(2, S, V_1)
\end{equation}
from the superadditivity. These show that $E(k, S, V_1 - \delta V_2 ) \geq 0$ if we let $\delta$ small enough, which depends on $\ell$.

\vskip10pt

\emph{Step 2: Non-negativity of the ground state energy in any large boxes. (Theorem \ref{thm2}.)}

We divide the large box into small boxes. For a positive potential case, the theorem is trivial, since the energy decreases if we neglect interaction between particles in different boxes. In the negative potential case, however, this may increase energy.

To resolve the problem, we first assume that we have a torus and change the origin of division continuously, hence consider a set of divisions. In each division, we consider interactions between the particles in the same box only. This converts the original Hamiltonian $H$ into the sum of Hamiltonians in small boxes, $H_u$, where $u$ denotes the origin of division. We calculate the average of this sum $\int H_u$ for the whole set of division and prove that $H \geq \int H_u$. This average gives us a good estimate for the energy in the torus, and we can prove the non-negativity of the ground state energy in any large tori.

To prove the theorem for a box with Neumann boundary conditions, we notice that the only difference between a torus and a box comes from that, when we move the origin of the division, we have more smaller boxes, whose sizes vary, at the boundary of the large box for a box case. This does not cause any problem, however, because we saw in step 1 that the ground state energy is non-negative whenever the box is small enough to be contained in the small box $\Lambda_{\ell}$.

\vskip10pt

\emph{Step 3: Lower bound for the ground state energy of two particles in $\Lambda_{\ell_1}$. (Lemma \ref{lem5}.)}

We first estimate the ground state energy of Neumann problem (Lemma \ref{lem4}.)
\begin{equation}
(-\Delta + \frac{1}{2} V ) \phi = E \phi
\end{equation}
on a sphere of radius $\ell_0 \sim \rho^{-\frac{1}{3} + \frac{5 \epsilon}{3} }$. Using the perturbation theory (Lemma \ref{perturb}.), we find a lower bound for the ground state energy of two particles in $\Lambda_{\ell_1}$, a cubic box of side length $\ell_1 \sim \rho^{-\frac{1}{3} + \frac{\epsilon}{3}}.$

\vskip10pt

\emph{Step 4: Dividing and subdividing $\mathbb{T}$ into small boxes. (Lemma \ref{lem6}.)}

We divide $\mathbb{T}$ into small boxes of side length $\ell_2$ and find a lower bound for the ground state energy in the small box $\Lambda_{\ell_2}$. To find the lower bound, we subdivide the small box into smaller boxes of side length $\ell_1 \gg \ell_0$, $\Lambda_{\ell_1}$, and replace $V$ by the one that only has interaction when two particles are in the same box. Note that these division and subdivision include the technique of changing origin in Step 2.

We let $\ell_1$ small so that average number of particles in $\Lambda_{\ell_1}$ is much less than one. Furthermore, we neglect the energy in $\Lambda_{\ell_1}$ whenever $\Lambda_{\ell_1}$ contains only one particle or three or more particles; this does not increase the total energy in $\Lambda_{\ell_2}$ due to the non-negativity of the ground state energy in Step 2. To actually find the lower bound, we use the perturbation theory and estimate the number of cases in which exactly two particles are in the same $\Lambda_{\ell_1}$, when using the constant function, which is the ground state of the unperturbed Hamiltonian.

\vskip10pt

\emph{Step 5: Lower bound for the ground state energy in $\mathbb{T}$. (Theorem \ref{thm1}, Main Theorem.)}

Using superadditivity and the ground state energy and the ground state energy in $\Lambda_{\ell_2}$, which we obatined in Step 4, we can find lower bound for the ground state energy in $\mathbb{T}$. Here, we again consider interactions only when two particles are in the same box. The actual calculation gives the lower bound of the main theorem.

\vskip25pt

\section{Nonnegativity of Ground State Energy}
In this section, we prove the following theorem.

\begin{thm} \label{thm2}
For any $\ell'$, there exists a positive constant $c_1$ such that the ground state energy of (\ref{h}), where $V=V_1 - c_1 V_2 $ and $V_1$ and $V_2$ satisfy the assumptions of Theorem \ref{thm1}, in $\mathbb{T}_{\ell'}$, a three dimensional torus of size $\ell'$, or in $\Lambda_{\ell'}$, a box of size $\ell'$, is non-negative for any $n$, i.e.
\begin{equation}
E(n, \mathbb{T}_{\ell'}, V_1 -c_1 V_2) \geq 0,
\end{equation}
and
\begin{equation}
E(n, \Lambda_{\ell'}, V_1 -c_1 V_2) \geq 0.
\end{equation}
\end{thm}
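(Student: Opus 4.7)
The plan is to reduce Theorem \ref{thm2} to Lemma \ref{lem1} by a translation-averaging argument. First fix the small-box side $\ell$ with $\ell > R_1$; Lemma \ref{lem1} then supplies a $\delta = \delta(\ell) > 0$ such that $E(k, S, V_1 - \delta V_2) \geq 0$ for every $k \geq 0$ and every rectangular box $S \subset \Lambda_\ell$. Set $\alpha := 1 - R_1/\ell > 0$ and take $c_1 := \delta\,\alpha^3$.

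For $u \in [0,\ell]^3$ I consider the tiling of $\mathbb{R}^3$ by translates of $[0,\ell]^3$ with origin $u$, and let $\chi_u(x,y) = 1$ if $x,y$ lie in the same tile (intersected with the domain) and $0$ otherwise. A per-coordinate computation shows that
\begin{equation*}
\overline{\chi}(x-y) \;:=\; \frac{1}{\ell^3}\int_{[0,\ell]^3}\chi_u(x,y)\,du \;=\; \prod_{i=1}^{3}\max\!\bigl(0,\,1 - |x_i - y_i|/\ell\bigr),
\end{equation*}
and, crucially, this average depends only on $x-y$, uniformly in absolute position. The key fact is one-dimensional: for fixed $x \le y$ with $y - x \le \ell$, the set of bad $u \in [0,\ell]$ (those for which a break-point $u + k\ell$ falls in $(x,y]$) has measure exactly $y - x$, regardless of where $x,y$ sit. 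This argument is insensitive to the boundary, so $\overline{\chi}$ is the same in $\Lambda_{\ell'}$ and in $\mathbb{T}_{\ell'}$ (there I may take $\ell \mid \ell'$ after a harmless enlargement). Since $V$ is supported in $\{|z|\le R_1\}$ and $\overline{\chi}\ge\alpha^3 > 0$ on this set, the rescaled potential $\widetilde V(z) := V(z)/\overline{\chi}(z)$ is well-defined, and writing $V = V_1 - c_1 V_2$ with $V_1, V_2 \ge 0$ gives the pointwise bound $\widetilde V \ge V_1 - (c_1/\alpha^3) V_2 = V_1 - \delta V_2.$

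Now set $\widetilde H_u := -\sum_j \Delta_j + \sum_{i<j} \widetilde V(x_i - x_j)\,\chi_u(x_i,x_j)$. The rescaling was chosen precisely so that $H = \ell^{-3}\int_{[0,\ell]^3}\widetilde H_u\,du$, since the kinetic term is $u$-independent and $\overline{\chi}\cdot\widetilde V = V$ on $\operatorname{supp} V$. It therefore suffices to prove $\langle\psi, \widetilde H_u \psi\rangle \ge 0$ for every bosonic $\psi$ and every $u$. Fix $u$ and partition the configuration space by the assignment $\vec T = (T_1,\dots,T_n)$ of each particle to its tile. On each piece $D_{\vec T}$ the kinetic energy decomposes tile-by-tile into Neumann kinetic energies, and $\chi_u$ cuts the pair interactions down to pairs in a common tile. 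Freezing the positions of the particles outside a given tile $T$ and viewing $\psi$ as a function of the remaining $k$ variables in $T^k$ (symmetric under intra-tile permutations, which is inherited from the bosonic symmetry of $\psi$), the tile contribution is at least $E(k, T, \widetilde V)\cdot\|\psi\|_{T^k}^2$. By monotonicity of the ground state energy in the potential together with Lemma \ref{lem1},
\begin{equation*}
E(k, T, \widetilde V) \;\ge\; E(k, T, V_1 - \delta V_2) \;\ge\; 0,
\end{equation*}
where $T$ is either a full cube of side $\ell$ or, in the box case, a boundary remnant — in either case $T \subset \Lambda_\ell$. Summing over tiles and partitions and then averaging in $u$ yields $\langle\psi, H\psi\rangle \ge 0$, which is the theorem.

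The main obstacle I anticipate is the bookkeeping behind the averaging identity: one must carefully verify that the averaged indicator $\overline{\chi}$ truly depends only on $x-y$ in both the torus and the box setting despite the presence (or wrap-around) of a boundary, and that the tile-wise splitting of $\widetilde H_u$ genuinely respects the bosonic symmetry once we descend to a single tile. Once these two points are cleanly established, Lemma \ref{lem1} carries all the remaining analytic weight.
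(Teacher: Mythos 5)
Your proposal is correct and takes essentially the same approach as the paper: both use the Conlon--Lieb--Yau translation-averaging of the grid origin $u$ over $[0,\ell]^3$, producing the triangular weight $\overline{\chi}(x-y)=\prod_i(1-|x_i-y_i|/\ell)$ (the paper's $h_\ell$), and then reduce to Lemma~\ref{lem1} on the $O(1)$ cells. Your reformulation (dividing $V$ by $\overline{\chi}$ to get the exact identity $H=\ell^{-3}\int \widetilde H_u\,du$, rather than bounding $V_1-c_1V_2 \geq h_\ell\,(V_1-\delta V_2)$ pointwise and invoking Lemma~\ref{lem7}) and your slightly different constant $c_1 = \delta(1-R_1/\ell)^3$ versus $\delta(1-\sqrt{3}R_1/\ell)$ are cosmetic.
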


\vskip10pt

We first prove this for a small cell $\Lambda_{\ell}$ of side length $\ell \sim O(1)$, which contains $k$ particles. To begin with, we first show a lemma that we will use throughout this paper.

\begin{lem} \label{perturb}
Let $A$ be a non-negative Hermitian operator on $L^2 (S)$, where $S$ is a bounded set. Assume that the constant function $\psi_0$ is an eigenfunction of $A$ with a simple eigenvalue $0$ and $\|\psi_0\|_{L^2 (S)} =1$. Furthermore, assume that the second smallest eigenvalue, the gap, of $A$ is $\gamma$. Suppose that $X$ is a multiplication operator on $L^2 (S)$ and let $X_{\infty} = \|X\|_{L^{\infty}(S)}$. Then, if $\gamma \geq 4 X_{\infty}$,
\begin{equation}
\infspec (A+X) \geq \langle \psi_0, X\psi_0 \rangle - \frac{2|X_{\infty}|^2} {\gamma}
\end{equation}
\end{lem}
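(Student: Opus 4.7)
The plan is a straightforward Rayleigh quotient estimate: decompose an arbitrary test function along $\psi_0$ and its orthogonal complement, use the spectral gap $\gamma$ to control the $A$-part on the orthogonal complement, and use $\|X\psi_0\|_{L^2(S)} \le X_\infty$ (since $\|\psi_0\|_{L^2(S)}=1$) to handle the cross terms. The assumption $\gamma \geq 4 X_\infty$ enters only at the last step, when the resulting scalar quadratic has to be shown to stay above $-2X_\infty^2/\gamma$.

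Concretely, I would let $\psi$ be any $L^2(S)$-normalized function and write $\psi = \alpha \psi_0 + \phi$ with $\phi \perp \psi_0$, so that $|\alpha|^2 + \|\phi\|_{L^2}^2 = 1$. Because $A\psi_0 = 0$, $A \ge 0$, and $A$ has its next eigenvalue at $\gamma$, one has $\langle \psi, A\psi\rangle = \langle \phi, A\phi\rangle \ge \gamma \|\phi\|_{L^2}^2$. Expanding the multiplication operator gives
\begin{equation*}
\langle \psi, X\psi\rangle = |\alpha|^2 \langle \psi_0, X\psi_0\rangle + 2\,\mathrm{Re}\bigl(\bar\alpha \langle \psi_0, X\phi\rangle\bigr) + \langle \phi, X\phi\rangle.
\end{equation*}
The two remaining terms are controlled by the crude $L^\infty$ bounds $|\langle\psi_0, X\phi\rangle| \le \|X\psi_0\|_{L^2}\|\phi\|_{L^2} \le X_\infty \|\phi\|_{L^2}$ and $|\langle \phi, X\phi\rangle| \le X_\infty \|\phi\|_{L^2}^2$; the first bound uses precisely that $\|\psi_0\|_{L^2}=1$, while $|\alpha|\le 1$ is used on the cross term.

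Setting $\beta = \|\phi\|_{L^2}$ and using $|\langle \psi_0, X\psi_0\rangle|\le X_\infty$ to replace $(1-\beta^2)\langle \psi_0, X\psi_0\rangle$ by $\langle \psi_0, X\psi_0\rangle - \beta^2 X_\infty$, one obtains
\begin{equation*}
\langle \psi, (A+X)\psi\rangle - \langle \psi_0, X\psi_0\rangle \ge (\gamma - 2X_\infty)\beta^2 - 2 X_\infty \beta.
\end{equation*}

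The final step is to minimize the right-hand side over $\beta \in [0,1]$. The minimum is attained at $\beta^* = X_\infty/(\gamma - 2X_\infty)$, producing the value $-X_\infty^2/(\gamma - 2X_\infty)$. The claimed bound $-2X_\infty^2/\gamma$ therefore amounts to the inequality $1/(\gamma-2X_\infty) \le 2/\gamma$, which is equivalent to $\gamma \ge 4X_\infty$, precisely the hypothesis. Taking the infimum over $\psi$ finishes the proof. There is really no substantive obstacle here; the only subtlety is keeping the coefficient in the $\beta$-quadratic small enough, which is exactly what the factor-of-$4$ gap assumption is designed for.
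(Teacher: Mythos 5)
Your proof is correct, and it takes a cleaner, genuinely different route than the paper's. The paper works with the (assumed) ground-state eigenfunction of $A+X$, normalized so it reads $\psi_0 + \psi'$ with $\psi' \perp \psi_0$; it then extracts two scalar equations by pairing the eigenvalue equation against $\psi_0$ and against $\psi'$, uses the second one together with the gap to bound $\|\psi'\|$ by $X_\infty/(\gamma - E_0 - X_\infty)$, and feeds that back into the first equation to close the estimate. Your argument instead bounds the Rayleigh quotient $\langle \psi,(A+X)\psi\rangle$ from below for an \emph{arbitrary} normalized $\psi = \alpha\psi_0 + \phi$, reducing everything to a one-variable quadratic in $\beta = \|\phi\|$ which you then minimize explicitly. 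What your route buys: it never needs the infimum of the spectrum to be attained by an eigenvector, nor does it need that eigenvector to have nonzero overlap with $\psi_0$ (the paper's normalization $\psi_0+\psi'$ silently assumes this; the degenerate case where the true ground state is orthogonal to $\psi_0$ would have to be handled separately there, though it is easy). What the paper's route buys: the eigenvalue-equation manipulation is slightly shorter and does not require identifying the exact minimizer of the scalar quadratic. Both proofs spend the hypothesis $\gamma \ge 4X_\infty$ at the same place, namely converting $\gamma - 2X_\infty$ (or $\gamma - E_0 - X_\infty$) into $\gamma/2$ to produce the stated constant $2$. One small point worth noting in your write-up: you minimize over $\beta\in[0,1]$ and should observe that the unconstrained minimizer $\beta^* = X_\infty/(\gamma-2X_\infty)\le 1/2$ under the hypothesis, so the constraint is inactive; you implicitly do this, but it deserves a word.
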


\begin{proof}[Proof of Lemma \ref{perturb}.]
Let $E_0 = \infspec (A+X)$. Then,
\begin{equation}
E_0 \leq \langle \psi_0, (A+X) \psi_0 \rangle = \langle \psi_0, X \psi_0 \rangle \leq X_{\infty}.
\end{equation}
Let $\psi_0 + \psi'$ be the eigenfunction of $A+X$ with the smallest eigenvalue and $\langle \psi_0, \psi' \rangle =0$. Then,
\begin{equation}
(A+X)(\psi_0 + \psi') = E_0 (\psi_0 + \psi') \label{e0}
\end{equation}
Taking inner products of both sides of (\ref{e0}) with $\psi_0$ and $\psi'$ gives
\begin{equation}
\langle \psi_0, X(\psi_0 + \psi') \rangle = E_0, \label{inner1}
\end{equation}
\begin{equation}
\langle \psi', X(\psi_0 + \psi' ) \rangle + \langle \psi', A \psi' \rangle = E_0 \langle \psi', \psi' \rangle,
\end{equation}
respectively. By the assumption, $\langle \psi', A\psi' \rangle \geq \gamma \langle \psi', \psi' \rangle$, thus
\begin{equation}
E_0 \langle \psi', \psi' \rangle \geq \langle \psi', X(\psi_0 + \psi') \rangle + \gamma \langle \psi', \psi' \rangle \label{perturb1}
\end{equation}

From the Schwarz inequality, we have
\begin{eqnarray}
\nonumber \langle \psi', X(\psi_0 + \psi') \rangle &\geq& - X_{\infty} \|\psi'\|_{L^2 (S)} \|\psi_0 + \psi' \|_{L^2 (S)} = - X_{\infty} \|\psi'\|_{L^2 (S)} (1+ \|\psi'\|_{L^2 (S)}^2 )^{\frac{1}{2}} \\
&\geq& - X_{\infty} ( \|\psi'\|_{L^2 (S)} + \|\psi'\|_{L^2 (S)}^2 ) \label{sch}.
\end{eqnarray}
Thus, from (\ref{perturb1}) and (\ref{sch}),
\begin{equation}
(E_0 - \gamma) \|\psi'\|_{L^2 (S)}^2 \geq - X_{\infty}(\|\psi'\|_{L^2 (S)} + \|\psi'\|_{L^2 (S)}^2),
\end{equation}
and, since $E_0 \leq X_{\infty} \leq \gamma$,
\begin{equation}
\|\psi'\|_{L^2 (S)} \leq \frac{X_{\infty}}{\gamma - E_0 - X_{\infty}}.
\end{equation}
Therefore, from (\ref{inner1}),
\begin{eqnarray}
\nonumber E_0 &=& \langle \psi_0, X\psi_0 \rangle + \langle \psi_0, X \psi' \rangle \\
&\geq& \langle \psi_0, X\psi_0 \rangle - X_{\infty} \|\psi'\|_{L^2 (B)} \geq \langle \psi_0, X\psi_0 \rangle - \frac{|X_{\infty}|^2}{\gamma - E_0 - X_{\infty}} \\
\nonumber &\geq& \langle \psi_0, X\psi_0 \rangle - \frac{2 |X_{\infty}|^2} {\gamma},
\end{eqnarray}
which was to be proved.
\end{proof}

Now we prove Theorem \ref{thm2} for a small box $\Lambda_{\ell}$ of side length $\ell$.

\begin{lem} \label{lem1}
Let $\Lambda_{\ell}$ be a box of side length $\ell$. Then, there exists $\delta > 0$ such that for any rectangular box $S \subset \Lambda_{\ell}$ and for any $k \geq 2$, the ground state energy of
\begin{equation}
H^{(k)}= -\sum_{j=1}^k \Delta_j + \sum_{i<j}^k (V_1 - \delta V_2 ) (x_i -x_j)
\end{equation}
in $S$ with Neumann boundary conditions is non-negative, i.e.
\begin{equation}
E(k, S, V_1 - \delta V_2) \geq 0
\end{equation}
Here, $R_1$ is the range of the potential $V_1 - \delta V_2$ as in Theorem \ref{thm1}.
\end{lem}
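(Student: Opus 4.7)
The approach is to use the convex decomposition $V_1 - \delta V_2 = (1-\delta) V_1 + \delta (V_1 - V_2)$ together with stability, superadditivity, and a two-particle estimate from Lemma \ref{perturb}. By concavity of the ground state energy in the potential,
$$E(k, S, V_1 - \delta V_2) \geq (1-\delta)\, E(k, S, V_1) + \delta\, E(k, S, V_1 - V_2).$$
Stability of $V_1 - V_2$ gives $E(k, S, V_1 - V_2) \geq -Bk$, while superadditivity of the non-negative $V_1$ gives $E(k, S, V_1) \geq \lfloor k/2 \rfloor E(2, S, V_1) \geq \frac{k}{3} E(2, S, V_1)$ for $k \geq 2$. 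Hence, once I produce a uniform bound $E(2, S, V_1) \geq c(\ell) > 0$ over rectangular $S \subset \Lambda_\ell$, the choice $\delta = c(\ell)/(3B + c(\ell))$ forces $E(k, S, V_1 - \delta V_2) \geq 0$.

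For the two-particle bound I split on $\mathrm{diam}(S)$. If $\mathrm{diam}(S) \leq R_0$, then $|x_i - x_j| \leq R_0$ for all $x_i, x_j \in S$, so $V_2(x_i - x_j) \equiv 0$ on $S^k$; the Hamiltonian reduces to $-\sum_j \Delta_j + \sum_{i < j} V_1(x_i - x_j) \geq 0$ and the desired inequality holds with no need to invoke Lemma \ref{perturb}. If instead $\mathrm{diam}(S) > R_0$, I apply Lemma \ref{perturb} on $L^2(S \times S)$ with $A = -\Delta_1 - \Delta_2$ (Neumann), the constant eigenfunction $\psi_0 \equiv |S|^{-1}$, and $X = V_1(x_1 - x_2)$. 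Since each side of $S$ is at most $\ell$, the Neumann gap of $A$ obeys $\gamma \geq \pi^2/\ell^2$. Choosing $\ell$ small enough that $\pi^2/\ell^2 \geq 4\|V_1\|_\infty$, Lemma \ref{perturb} yields
$$E(2, S, V_1) \geq \langle \psi_0, V_1(x_1 - x_2) \psi_0 \rangle - \frac{2\|V_1\|_\infty^2 \ell^2}{\pi^2}.$$

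The main obstacle is a uniform positive lower bound on the first-order term $\langle \psi_0, V_1 \psi_0 \rangle = |S|^{-2} \int_{S \times S} V_1(x_1 - x_2)\, dx_1\, dx_2$ over all rectangular $S \subset \Lambda_\ell$ with $\mathrm{diam}(S) > R_0$, including thin or near-degenerate subboxes. The key point is that since $V_1$ is a non-trivial, spherically symmetric, non-negative continuous function supported in $\{|x| \leq R_0\}$, there is an annulus $\{r_1 \leq |y| \leq r_2\}$ on which $V_1 > 0$ with $r_1 < R_0$; the assumption $\mathrm{diam}(S) > R_0 > r_1$ then forces $S - S$ to intersect this annulus in a set of positive measure, making $\langle \psi_0, V_1 \psi_0 \rangle > 0$ pointwise. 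A continuity-and-compactness argument on the parameter space of rectangular boxes in $\Lambda_\ell$ with $\mathrm{diam} \geq R_0$, verifying that the first-order term extends continuously to the boundary where some side-length vanishes and stays strictly positive there, promotes this to a uniform bound $\langle \psi_0, V_1 \psi_0 \rangle \geq 2 c_1 > 0$. Taking $\ell$ smaller still so that $2\|V_1\|_\infty^2 \ell^2/\pi^2 \leq c_1$ gives $E(2, S, V_1) \geq c_1 > 0$, and the reduction above then produces the lemma.
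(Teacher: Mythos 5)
The reduction at the start of your proposal---the convex split $V_1-\delta V_2 = (1-\delta)V_1 + \delta(V_1-V_2)$, stability for the second term, superadditivity for the first, and the choice of $\delta$ in terms of $B$ and a two-particle lower bound---is exactly the paper's strategy. The gap is in how you establish $E(2,S,V_1) \geq c(\ell) > 0$.

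Your argument via Lemma \ref{perturb} requires the Neumann gap $\gamma \geq \pi^2/\ell^2$ to dominate $4\|V_1\|_\infty$, and then requires the second-order error $2\|V_1\|_\infty^2 \ell^2/\pi^2$ to be smaller than $\langle \psi_0, V_1 \psi_0\rangle$. Both conditions force $\ell$ to be \emph{small}. But $\ell$ is given in the statement of the lemma, not at your disposal: ``Let $\Lambda_\ell$ be a box of side length $\ell$.'' Indeed, in the only place the lemma is used (the proof of Theorem \ref{thm2}), $\ell$ is set to $\max\{10\tilde\ell, 2R_1\}$, which is in particular at least $2R_1 \geq 2R_0$; there is no reason for this to satisfy $\pi^2/\ell^2 \geq 4\|V_1\|_\infty$. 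Moreover, even if the gap condition held, the mismatch is fatal as $\ell$ grows: $\langle\psi_0, V_1\psi_0\rangle \sim |S|^{-1}\int V_1$ decays like $\ell^{-3}$ while the error term grows like $\ell^2$, so the perturbative lower bound becomes negative for moderate $\ell$ and gives nothing. This is why the paper does \emph{not} use Lemma \ref{perturb} here. Instead it applies the generalized Dyson Lemma (Lemma \ref{dyson}) after a change of variables $\xi = x_1+x_2$, $\eta = x_1 - x_2$, and after peeling off the strictly positive part $V_1(0)/2 \cdot 1(|\eta|<R)$ of $V_1$ near the origin (which exists because stability of $V_1 - V_2$ forces $V_1(0)>0$). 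This yields $E(2,S,V_1) \geq E'(\ell) \sim a'/\ell^3$, a bound that decays at the right rate and is valid for all $\ell$, including the large $\ell$ that Theorem \ref{thm2} needs.

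A secondary issue: the uniform positivity of $\langle\psi_0, V_1\psi_0\rangle$ over all rectangular $S\subset\Lambda_\ell$ with $\mathrm{diam}(S)>R_0$ is asserted via a ``continuity-and-compactness argument on the parameter space of boxes'' without verification. The parameter space is not compact where side lengths degenerate to zero, and showing the averaged interaction extends continuously and stays bounded below there is a nontrivial limit computation that your writeup leaves as a promise. The paper avoids this entirely: Dyson's lemma delivers a pointwise lower bound for each fixed $\xi$-slice $S_\xi$ with an explicit $\ell$-dependent constant, so no compactness argument over the family of sub-boxes is needed.
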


\begin{proof}[Proof of Lemma \ref{lem1}.]
We first prove that $E(2, S, V_1) \geq E' > 0$. Since $V_1 - V_2$ is stable, $V_1(0) >0$. Thus, we can find $R>0$ such that $V_1(x) >\frac{V_1(0)}{2}$ if $|x| < R$. Let
\begin{equation}
V_1'(x) =
  \begin{cases}
  V_1(x) - \frac{V_1 (0)}{2} & \text{if} \;\; |x| <R \\
  0 & \text{otherwise}
  \end{cases},
\end{equation}
and $a'$ be the scattering length of $2 V_1'$.

By change of variable $\xi = x_1 + x_2$, $\eta = x_1 - x_2$, and $S'=\{(\xi, \eta) | \frac{\xi + \eta}{2}, \frac{\xi - \eta}{2} \in \Lambda_{\ell} \}$, we get
\begin{eqnarray}
\nonumber && \int_{S^2} dx_1 dx_2 \big[ |\nabla_1 \psi |^2 + |\nabla_2 \psi |^2 + V_1 (x_1 - x_2 ) |\psi |^2 \big]  = \frac{1}{8} \int_{S^2} d \xi d\eta \big[ |\nabla_{\xi} \psi |^2 + |\nabla_{\eta} \psi |^2 + V_1 (\eta ) |\psi |^2 \big]  \\
&\geq& \frac{1}{8} \int_{S'} d \xi d \eta  \big[ |\nabla_{\eta} \psi |^2  + V_1' (\eta) |\psi |^2 + \frac{V_1(0)}{2} 1(|\eta| <R ) |\psi|^2 \big].
\end{eqnarray}
To find the lower bound, we use the following lemma, the generalization of a Lemma of Dyson \cite{LY1}.

\begin{lem} \label{dyson}
Let $v(r) \geq 0$ with scattering length $a$ and $v(r) = 0$ for $r > R_0$. Let $U(r) \geq 0$ be any function satisfying $\int U(r)r^2 dr \leq 1$ and $U(r) = 0$ for $r < R_0$. Let $D \subset \mathbb{R}^3$ be star-shaped (convex suffices) with respect to $0$. Then, for all differentiable functions $\phi$,
\begin{equation}
\int_D dx \Big[ |\nabla \phi (x) |^2 + \frac{1}{2} v(r) |\phi(x)|^2 \Big] \geq a \int_D dx U(r) |\phi(x)|^2.
\end{equation}
\end{lem}
\begin{proof}
See Lemma 1 in \cite{LY1}.
\end{proof}

Choose $\tilde{\ell}$ large so that
\begin{equation} \label{tilde_ell}
0< \frac{3a'}{(2\tilde{\ell})^3 - R^3} < \frac{V_1 (0)}{2}.
\end{equation}
Define $U_{\ell} (r)$ as follows: When $\ell \geq \tilde{\ell}$,
\begin{equation}
U_{\ell} (r) =
  \begin{cases}
  3[ (2\ell)^3 - R^3 ]^{-1} & \text{if} \;\; R<r<2 \ell \\
  0 & \text{otherwise}
  \end{cases},
\end{equation}
and, when $\ell < \tilde{\ell}$,
\begin{equation}
U_{\ell} (r) =
  \begin{cases}
  3[ (2\tilde{\ell})^3 - R^3 ]^{-1} & \text{if} \;\; R<r<2 \ell \\
  0 & \text{otherwise}
  \end{cases},
\end{equation}

For any fixed $\xi$, let $S_{\xi} = \{ \eta | \frac{\xi + \eta}{2}, \frac{\xi - \eta}{2} \in \Lambda_{\ell} \}$. Then, $\int U(r) r^2 dr = 1$. By Lemma \ref{dyson}, since $S_{\xi}$ is convex,
\begin{equation}
\int_{S_{\xi}} d\eta \big[ |\nabla_{\eta} \psi |^2 + V_1 ' (\eta ) |\psi|^2 \big] \geq a' \int_{S_{\xi}} d\eta U_{\ell}(\eta) | \psi |^2.
\end{equation}
Let
\begin{equation}
E'(\ell) := 
  \begin{cases}
  \displaystyle \frac{3a'}{(2\tilde{\ell})^3 - R^3} & \text{if} \;\; \ell < \tilde{\ell} \\
  \displaystyle \frac{3a'}{(2\ell)^3 - R^3} & \text{if} \;\; \ell \geq \tilde{\ell}
  \end{cases}.
\end{equation}
Then, we get
\begin{eqnarray}
\int_{S_{\xi}} d \eta  \big[ |\nabla_{\eta} \psi |^2  + V_1' (\eta) |\psi |^2 + \frac{V_1(0)}{2} 1(|\eta| <R ) |\psi|^2 \big] \geq E'(\ell) \int_{S_{\xi}} d\eta | \psi |^2.
\end{eqnarray}
Hence,
\begin{equation}
E(2, S, V_1) = \inf_{\| \psi \|_2 =1} \int_{S^2} dx_1 dx_2 \Big[ |\nabla_1 \psi |^2 + |\nabla_2 \psi |^2 + V_1 (x_1 - x_2 ) |\psi|^2 \Big] \geq E'. \label{e2}
\end{equation}

Now, from the superadditivity (\ref{eqn1}),
\begin{equation}
E(k, S, V_1 ) \geq \lfloor \frac{k}{2} \rfloor E(2, S, V_1) \geq \frac{k}{3} E(2, S, V_1 ) \geq \frac{E' k}{3},
\end{equation}
where $\lfloor \frac{k}{2} \rfloor$ denotes the greatest integer that does not exceed $\frac{k}{2}$. Hence, if
\begin{equation}
\delta \leq \min \{ \frac{1}{2}, \frac{E'}{6B} \}, \label{delta}
\end{equation}
then,
\begin{eqnarray}
\nonumber &&\sum_{j=1}^k \int_{S^k} dX_k |\nabla_j \psi(X_k)|^2 + \sum_{i<j}^k \int_{S^k} dX_k \big( V_1(x_i-x_j)-\delta V_2(x_i-x_j) \big) |\psi(X_k)|^2 \\
\nonumber &\geq& (1-\delta) \Big[ \sum_{j=1}^k \int_{S^k} dX_k |\nabla_j \psi(X_k)|^2 + \sum_{i<j}^k \int_{S^k} dX_k V_1(x_i-x_j) |\psi(X_k)|^2 \Big]\\
&& +\;\;\delta \sum_{i<j}^k \int_{S^k} dX_k \big(V_1(x_i-x_j)- V_2(x_i-x_j) \big) |\psi(X_k)|^2 \\
\nonumber &\geq& \frac{1}{2}\cdot \frac{E' k}{3} -\delta B k \geq 0.
\end{eqnarray}
This proves the lemma.
\end{proof}

Lemma \ref{lem1} proves Theorem \ref{thm2} when $\ell' = O(1)$. Since $\delta$ depends on $\ell$, however, the argument we used in the proof of Lemma \ref{lem1} is no longer valid for a fixed $\delta$ when $\ell'$ is large.

In the case when $\ell'$ is large, we divide the box of size $\ell'$ into smaller boxes. However, the potential here contains small negative parts in it, so that dividing the box does not guarantee that the ground state energy decreases. Thus, using the argument similar to \cite{CLY}, we change the grid for division continuously and take an average of the ground state energy for such divisions.

\begin{proof}[Proof of Theorem \ref{thm2}.]
We first prove the torus case. Let $\ell = \max \{ 10 \tilde{\ell}, 2 R_1 \}$ where $\tilde{\ell}$ is defined as in \eqref{tilde_ell} in the proof of Lemma \ref{lem1}. Let $\ell' \geq \ell$. Let $\chi$ be a characteristic function of the unit cube,
\begin{equation}
\chi(x)=
\begin{cases}
1 & x \in [0, 1]^3 \\
0 & \text{otherwise}\\
\end{cases},
\end{equation}
defined on $\mathbb{T}_{\ell'} = \mathbb{R}^3 / (\ell' \mathbb{Z})^3$.

Let $\chi_{u \lambda}(x) = \chi (x+u+\lambda)$. Here, $\lambda \in G$ where $G=\{p \in \mathbb{Z}^3 | \ell p \in \Lambda_{\ell'} \}$, and $u \in [0, 1]^3 \equiv \Gamma$. Then, $\sum_{\lambda \in G} \chi_{u \lambda} (x) =1$ for all $x \in \mathbb{T}_{\ell'}, u \in \Gamma$.

A function $h$ is defined by
\begin{equation}
h(x,y) = \int_{\Gamma} du \sum_{\lambda \in G} \chi_{u \lambda} (x) \chi_{u \lambda} (y). \label{hl}
\end{equation}
Then, $h$ depends only on the difference $z=x-y$ and we can let
\begin{eqnarray}
\nonumber h(z) &=& h(x, y) = \int_{\Gamma} du \sum_{\lambda \in G} \chi (x+u+\lambda) \chi (y+u+\lambda) \\
&=& \int_{\mathbb{T}_{\ell'}} du \;\chi (x+u) \chi (y+u) = (\chi * \chi )(z).
\end{eqnarray}

Let $h_{\ell} (x) =h(x/\ell)$. We can define localized kinetic and potential energies. Let $\alpha = (u, \alpha_1 , \cdots, \alpha_n) \in \Gamma \times G^n $ be a multi-index and $\int d\alpha =
\int_{\Gamma} du \sum_{\alpha_1 \in G} \cdots \sum_{\alpha_n \in G}$. Let
\begin{equation}
\psi_{\alpha}^{\ell}(x_1, \cdots, x_n) = \prod_{k=1}^{n} \chi_{u \alpha_k} (\frac{x_k}{\ell}) \psi (x_1, \cdots, x_n)
\end{equation}
\begin{equation}
V_{\alpha} = \sum_{i<j}^n \delta_{\alpha_i \alpha_j} V,
\end{equation}
where $V=V_1 - \delta V_2$. Then from the definition of $\chi$ and $h$, we have
\begin{eqnarray}
\nonumber \int d\alpha  V_{\alpha} |\psi_{\alpha}^{\ell}|^2 &=& \sum_{i<j}^{n} \int_{\mathbb{T}_{\ell'}^n} dX_n \int d\alpha \prod_{k=1}^n \chi_{u \alpha_k} (\frac{x_k}{\ell}) V(x_i -x_j) \delta_{\alpha_i \alpha_j} |\psi(X_n )|^2 \\
&=& \sum_{i<j}^n \int_{\mathbb{T}_{\ell'}^n} dX_n V(x_i -x_j) h_{\ell} (x_i -x_j ) |\psi(X_n)|^2. \label{kinetic}
\end{eqnarray}

We also know that
\begin{eqnarray}
\nonumber & &\sum_{j=1}^n \int_{\mathbb{T}_{\ell'}^n} dX_n|\nabla_j \psi(X_n)|^2 \\
&=&\int_{\Gamma} du \sum_{\sum n_{\sigma} =n} \int_{\mathbb{T}_{\ell'}^n} dX_n \sum_{j=1}^n |\nabla_j \psi(X_n)|^2 \prod_{\sigma} 1(N_{u \sigma} (X_n) =n_{\sigma}) \label{potential} \\
\nonumber &=&\sum_{\sum N_{\sigma} = n} \int d\alpha \int_{\mathbb{T}_{\ell'}^n} dX_n \prod_{k=1}^n \chi_{u\alpha_k} (\frac{x_k}{\ell}) \sum_{j=1}^n |\nabla_j \psi(X_n)|^2 \prod_{\sigma} 1(N_{u \sigma} (X_n) =n_{\sigma} )
\end{eqnarray}
where $N_{u \sigma}(X_n) = \sum_{j=1}^n \chi_{u \sigma}(x_j)$.

Thus, from the equations above, we get the following lemma.

\begin{lem} \label{lem7}
For any $n$,
\begin{equation}
E(n, \mathbb{T}_{\ell'}, V h_{\ell}) \geq \inf_{ \sum n_{\sigma} =n} \sum_{\sigma \in G} E(n_{\sigma}, \Lambda_{\ell}, V)
\end{equation}
\end{lem}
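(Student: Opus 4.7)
The plan is to combine the two identities (\ref{kinetic}) and (\ref{potential}) into a single representation of $\langle\psi,H\psi\rangle$ as an average over the multi-index $\alpha=(u,\alpha_1,\ldots,\alpha_n)$, and then to bound the integrand pointwise in $\alpha$, box by box, by invoking the Neumann variational definition of $E(n_\sigma,\Lambda_\ell,V)$.

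Fix a normalized symmetric test function $\psi$, and let $H=-\sum_j\Delta_j+\sum_{i<j}(Vh_\ell)(x_i-x_j)$ on $\mathbb{T}_{\ell'}$. Identity (\ref{kinetic}) rewrites the potential energy as $\int d\alpha\int V_\alpha|\psi_\alpha^\ell|^2\,dX_n$, and (\ref{potential}), together with the partition of unity $\sum_{\lambda}\chi_{u\lambda}\equiv 1$, rewrites the kinetic energy as the average of $\sum_j|\nabla_j\psi|^2$ against the same weight $\prod_k\chi_{u\alpha_k}(x_k/\ell)$. Summing the two gives
\begin{equation*}
\langle\psi,H\psi\rangle=\int d\alpha\int_{\mathbb{T}_{\ell'}^n}dX_n\,\prod_{k=1}^n\chi_{u\alpha_k}(x_k/\ell)\Bigg[\sum_{j=1}^n|\nabla_j\psi|^2+\sum_{i<j}\delta_{\alpha_i\alpha_j}V(x_i-x_j)|\psi|^2\Bigg].
\end{equation*}
For fixed $\alpha$, the factor $\prod_k\chi_{u\alpha_k}(x_k/\ell)$ confines particle $k$ to a unique box $B_{\alpha_k}(u)\subset\mathbb{T}_{\ell'}$ of side $\ell$, while $\delta_{\alpha_i\alpha_j}$ retains only interactions between particles assigned to the same box. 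Setting $I_\sigma=\{k:\alpha_k=\sigma\}$ and $n_\sigma(\alpha)=|I_\sigma|$, the bracket decomposes as $\sum_{\sigma\in G}\bigl[\sum_{k\in I_\sigma}|\nabla_k\psi|^2+\sum_{i<j\in I_\sigma}V(x_i-x_j)|\psi|^2\bigr]$, one summand per box.

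Apply Fubini one box at a time. For each $\sigma$ with $n_\sigma\geq 1$, freeze the external coordinates $\{x_k\}_{k\notin I_\sigma}$ in their assigned boxes; then $\psi$, viewed as a function of $\{x_k\}_{k\in I_\sigma}$ on $B_\sigma(u)^{n_\sigma}$, is smooth and symmetric in those variables, hence an admissible test function for the Neumann problem defining $E(n_\sigma,\Lambda_\ell,V)$. Therefore
\begin{equation*}
\int_{B_\sigma(u)^{n_\sigma}}\Big[\sum_{k\in I_\sigma}|\nabla_k\psi|^2+\sum_{i<j\in I_\sigma}V(x_i-x_j)|\psi|^2\Big]dX_{I_\sigma}\ \geq\ E(n_\sigma,\Lambda_\ell,V)\int_{B_\sigma(u)^{n_\sigma}}|\psi|^2\,dX_{I_\sigma}.
\end{equation*}
Integrating out the external coordinates, summing on $\sigma$, and using $|\psi_\alpha^\ell|^2=\prod_k\chi_{u\alpha_k}(x_k/\ell)|\psi|^2$ yields the $\alpha$-wise lower bound
\begin{equation*}
\int dX_n\,\prod_k\chi_{u\alpha_k}(x_k/\ell)\,[\cdots]\ \geq\ \Big(\sum_{\sigma\in G}E(n_\sigma(\alpha),\Lambda_\ell,V)\Big)\int|\psi_\alpha^\ell|^2\,dX_n.
\end{equation*}

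To finish, observe that $\sum_{\sigma\in G}E(n_\sigma(\alpha),\Lambda_\ell,V)\geq\inf_{\sum n_\sigma=n}\sum_\sigma E(n_\sigma,\Lambda_\ell,V)$ uniformly in $\alpha$; integrating in $\alpha$ and using $\int d\alpha\int|\psi_\alpha^\ell|^2\,dX_n=\|\psi\|_2^2=1$ (again from $\sum_\lambda\chi_{u\lambda}\equiv 1$) gives $\langle\psi,H\psi\rangle\geq\inf_{\sum n_\sigma=n}\sum_\sigma E(n_\sigma,\Lambda_\ell,V)$, and infimizing over $\psi$ concludes the proof. The main delicate point is the box-by-box Fubini step: it requires that the set $\{\prod_k\chi_{u\alpha_k}(x_k/\ell)=1\}$ really is a Cartesian product of $\ell$-boxes in $\mathbb{T}_{\ell'}$ (which is exactly what the lattice $G$ arranges) and that the restriction of the smooth symmetric $\psi$ to one such box is a bona fide Neumann test function in $\Lambda_\ell$, both of which hold automatically here.
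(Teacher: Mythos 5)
Your proof is correct and takes essentially the same approach as the paper. The paper's proof is a one-liner that combines \eqref{kinetic} and \eqref{potential} and directly invokes the variational definition of $E(n_\sigma,\Lambda_\ell,V)$; you spell out the implicit step --- for each fixed $\alpha$ the weight factorizes the configuration space into a Cartesian product of $\ell$-boxes, so one can freeze the external coordinates and apply the Neumann variational lower bound box by box, then average over $\alpha$ using $\int d\alpha\,\|\psi_\alpha^\ell\|_2^2 = \|\psi\|_2^2 = 1$ --- which is exactly the content of the paper's displayed inequality.
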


\begin{proof}[Proof of Lemma \ref{lem7}]
Combining \eqref{kinetic} and \eqref{potential}, we get
\begin{eqnarray}
\nonumber && \sum_{j=1}^n \int_{\mathbb{T}_{\ell'}^n} dX_n |\nabla_j \psi(X_n)|^2 + \sum_{i<j}^n \int_{\mathbb{T}_{\ell'}^n} dX_n \; V(x_i -x_j) h_{\ell} (x_i -x_j ) |\psi(X_n)|^2 \\
&\geq& \Bigl( \int d\alpha \| \psi_{\alpha}^{\ell} \|_2^2 \Bigr) \inf_{ \sum n_{\sigma} =n} \sum_{\sigma \in G} E(n_{\sigma}, \Lambda_{\ell}, V) = \inf_{ \sum n_{\sigma} =n} \sum_{\sigma \in G}
E(n_{\sigma}, \Lambda_{\ell}, V)
\end{eqnarray}
\end{proof}

Recall that $V_2 (x_i -x_j )=0$ when $|x_i -x_j| \geq R_1$. Since $R_1=O(1)$ and $h(z) = (\chi * \chi)(z) = g(z^1)g(z^2)g(z^3)$ where $z=(z^1, z^2, z^3)$ and
\begin{equation}
g(t)=
\begin{cases}
1-|t| & |t| \leq 1 \\
0 & \text{otherwise}\\
\end{cases},
\end{equation}
we can let $\ell \geq 2 R_1$ so that $|x_i -x_j| <R_1$ implies $h_{\ell} (x_i -x_j) \geq 1-\frac{\sqrt{3} R_1}{\ell}$. Now, if we let
\begin{equation}
c_1 =(1-\frac{\sqrt{3} R_1}{\ell}) \delta, \label{c1}
\end{equation}
where $\delta$ is defined in Lemma \ref{lem1} with $\ell$, then $V_1 \geq h_{\ell}V_1$ and $c_1 V_2 \leq \delta V_2 h_{\ell}$, hence
\begin{eqnarray}
\nonumber &&\sum_{j=1}^n \int_{\mathbb{T}_{\ell'}} dX_n |\nabla_j \psi(X_n)|^2 + \sum_{i<j}^n \int_{\mathbb{T}_{\ell'}} dX_n \;(V_1 - c_1 V_2 )(x_i-x_j) |\psi(X_n)|^2 \\
&\geq& \sum_{j=1}^n \int_{\mathbb{T}_{\ell'}} dX_n |\nabla_j \psi(X_n)|^2 + \sum_{i<j}^n \int_{\mathbb{T}_{\ell'}} dX_n \;(V_1 - \delta V_2 )(x_i-x_j) h_{\ell}(x_i-x_j) |\psi(X_n)|^2.
\end{eqnarray}

Hence, if $E(n_{\sigma}, \Lambda_{\ell}, V) \geq 0$ for any $n_{\sigma}$, the ground state energy is non-negative, and it is already proved in Lemma \ref{lem1}. This proves the first part of the theorem.

\vskip10pt

The box case can be proved in a similar way. When we have a box of size $\ell'$, $\Lambda_{\ell'}$, the only difference between $\Lambda_{\ell'}$ and $\mathbb{T}_{\ell'}$ is that $\Lambda_{\ell'}$ has walls at the boundary, in the sense that the particles do not interact across these walls. Thus, if we use the same argument as in the case of the torus, we can get the Lemma \ref{lem7} with some small cells at the boundary of $\Lambda_{\ell'}$ having walls in them. If one of these small cells, $B$, is divided into $B_1, B_2, \cdots, B_m$ by the walls, then, in this case, $E(k, B, V)$ actually denotes $\displaystyle \inf_{\sum k_m = k} \sum_{j=1}^m E(k_j, B_j, V)$. Each $E(k_j, B_j, V)$ is the ground state energy in $B_j$, which is smaller than $B$, and Lemma \ref{lem1} holds for smaller cells. Thus, $E(k_j, B_j, V) \geq 0$ and $\displaystyle \inf_{\sum k_m = k} \sum_{j=1}^m E(k_j, B_j, V) \geq 0$. Hence, we get the desired theorem.
\end{proof}

\vskip30pt

\section{Lower Bound of Ground State Energy}
With the aid of the Theorem \ref{thm2}, we will prove our main result. Assuming conditions in Theorem \ref{thm1}, we start with a lemma which is similar to lemma A.1 in \cite{ESY}, with the interaction potential
\begin{equation}
V= V_1 - \frac{1}{2} c_1 V_2 = V_1 - \lambda V_2, \label{lambda}
\end{equation}
where $c_1$ satisfies Theorem \ref{thm2} and is defined by (\ref{delta}) and (\ref{c1}).

\begin{lem} \label{lem4}
Let $\phi$ be a solution of Neumann Problem
\begin{equation}
(-\Delta + \frac{1}{2}V) \phi = E\phi
\end{equation}
on the sphere of radius $\ell_0$, where E is the ground state energy, with the boundary condition
\begin{equation}
\phi(\ell_0) =1, \;\;\; \frac{\partial \phi}{\partial r}(\ell_0)=0.
\end{equation}
Let $a$ be the scattering length of $V$. Then if $\ell_0 \gg R_1$, we have
\begin{equation}
E \geq \frac{3a}{\ell_0^3} \bigl(1 + O(\frac{1}{\ell_0}) \bigr).
\end{equation}
\end{lem}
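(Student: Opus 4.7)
My plan is to exploit the spherical symmetry of the Neumann ground state and reduce the PDE to a one-dimensional radial ODE, then match solutions across $r = R_1$ to extract the scattering length.

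First I would observe that by a standard rearrangement / symmetric-decreasing argument, the ground state $\phi$ of this Neumann problem on a ball is radial. Writing $u(r) = r\phi(r)$ converts the eigenvalue equation into
\begin{equation}
-u'' + \tfrac{1}{2} V u = E u, \qquad 0 < r < \ell_0,
\end{equation}
with boundary conditions $u(0) = 0$, and (from $\phi(\ell_0)=1$, $\phi'(\ell_0)=0$)
\begin{equation}
u(\ell_0) = \ell_0, \qquad u'(\ell_0) = 1.
\end{equation}
Because $V$ vanishes for $r > R_1$, for $r$ in that range the equation is simply $-u'' = Eu$, so (assuming $E > 0$, which will be the case when $a > 0$) I can write the exterior solution in the form
\begin{equation}
u(r) = A \sin\bigl(k(r - r_0)\bigr), \qquad k = \sqrt{E},
\end{equation}
for constants $A$ and $r_0$ determined by matching to the interior piece.

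The next step is to connect $r_0$ to the scattering length. Recall that the zero-energy scattering solution $u_*$ of $-u_*'' + \tfrac{1}{2} V u_* = 0$ with $u_*(0)=0$ satisfies $u_*(r) = r - a$ for $r > R_1$, which is the defining property of $a$. A standard perturbative comparison of $u$ with $u_*$ on the compact interval $[0, R_1]$ (writing the difference as the solution of an inhomogeneous equation driven by $Eu$, and using that the comparison Wronskian is non-degenerate) shows that $r_0 = a + O(E)$; equivalently, the matching data $(u(R_1), u'(R_1))$ differs from the zero-energy data by $O(E)$. This gives the phase $r_0 = a + O(k^2)$, which, since $a = O(1)$ and $k^2$ will turn out to be $O(\ell_0^{-3})$, contributes only to the error term.

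I then impose the two boundary conditions at $r = \ell_0$. Dividing $u(\ell_0)=\ell_0$ by $u'(\ell_0) = 1$ eliminates $A$ and yields the transcendental relation
\begin{equation}
\tan\bigl(k(\ell_0 - r_0)\bigr) = k \ell_0.
\end{equation}
Because $\ell_0 \gg R_1 \geq a$, one expects $k\ell_0$ small, so Taylor expansion of $\tan$ gives $k(\ell_0 - r_0) + \tfrac{1}{3} k^3 (\ell_0 - r_0)^3 + \ldots = k \ell_0$, i.e.
\begin{equation}
\tfrac{1}{3} k^2 (\ell_0 - r_0)^3 = r_0 + O(k^4 \ell_0^5),
\end{equation}
and substituting $r_0 = a + O(k^2)$ gives $E = k^2 = \frac{3a}{\ell_0^3}\bigl(1 + O(1/\ell_0)\bigr)$.

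The main obstacle I anticipate is the rigorous justification of the "scattering phase" claim $r_0 = a + O(E)$, which requires controlling the sensitivity of the interior ODE to the energy parameter in a way that is uniform in $\ell_0$. Concretely I would set $w := u - u_*$ on $[0, R_1]$, note that $w$ solves $-w'' + \tfrac{1}{2} V w = E u$ with $w(0) = 0$, and use variation of parameters against a fundamental pair built from $u_*$ and the second linearly independent zero-energy solution; the compactness of $[0,R_1]$ together with continuity of $V$ then gives a bound $|w| + |w'| \leq C E$ on $[0, R_1]$, which translates into the $O(E)$ shift of the matching phase and hence of $r_0$. Once that estimate is in hand, the final Taylor expansion is routine bookkeeping and yields the claimed lower bound (in fact equality, up to the stated error).
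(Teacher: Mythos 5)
Your route is genuinely different from the paper's. The paper lower-bounds the Neumann quadratic form by a ground-state substitution: it builds a positive reference function $\psi(r) = \sin\bigl(hf(r)\bigr)/r$ out of the zero-energy scattering solution $f(r)=rw(r)$, picks $h$ so $\psi'(\ell_0)=0$ (giving $h^2 = 3a/\ell_0^3 + O(\ell_0^{-4})$), and writes $\int \overline{\phi}(-\Delta+\tfrac{1}{2}V)\phi = \int |g|^2\psi(-\Delta+\tfrac{1}{2}V)\psi + \int |\nabla g|^2\psi^2$ with $g=\phi/\psi$. The error terms are then controlled by a Hardy-type inequality, and the conclusion $E\geq h^2(1+O(1/\ell_0))$ holds with \emph{no sign assumption on $E$}. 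You instead carry out a shooting/phase-matching computation, reducing to the 1D radial ODE, matching across $r=R_1$ to the zero-energy scattering data to get $r_0 = a + O(E)$, and then Taylor-expanding the boundary condition $\tan\bigl(k(\ell_0 - r_0)\bigr)=k\ell_0$. The phase-shift estimate $r_0 = a + O(E)$ and the final expansion are both correct (the normalization issue resolves because the Weyl quantity $u(R_1)/u'(R_1)$ is normalization-independent and depends smoothly on $E$, with the zero-energy value $R_1-a$), so the bookkeeping gives exactly $E = \frac{3a}{\ell_0^3}\bigl(1+O(1/\ell_0)\bigr)$ --- which is even sharper than the stated lower bound.

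The genuine gap is your parenthetical "assuming $E>0$, which will be the case when $a>0$." You write the exterior solution as $A\sin\bigl(k(r-r_0)\bigr)$ with $k=\sqrt{E}$; if $E\leq 0$ the exterior piece is linear or hyperbolic and the entire matching collapses, so the positivity of the ground-state energy must be established \emph{before} the computation, not read off from its conclusion. Since $V=V_1-\lambda V_2$ has a negative tail, positivity of $E$ is equivalent to the absence of a negative-energy bound state of $-\Delta+\tfrac{1}{2}V$ on the ball, and that is precisely the kind of statement that needs an argument here (e.g., that $\lambda$ is small enough that the operator has non-negative spectrum, or a convexity argument on $[R_1,\ell_0]$ combined with the interior phase data to rule out $E<0$ small, plus a Birman--Schwinger-type bound to rule out $E$ negative and bounded away from zero). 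The paper's variational identity side-steps this entirely because it bounds the Rayleigh quotient from below directly, for arbitrary trial functions, without ever positing a sign for $E$. You should either supply the missing positivity argument or restructure the proof along variational lines so that the sign of $E$ is an output rather than an input.

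A minor point: you invoke a rearrangement argument to show the Neumann ground state is radial. For a potential with a negative piece this is not the cleanest route; it is more standard (and sign-insensitive) to note that the operator commutes with rotations and the ground state is simple and positive (Perron--Frobenius), hence rotation invariant. This is not a gap in substance --- the paper also takes radiality of $\phi$ as given --- but the stated justification is weaker than it needs to be.
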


\begin{proof}[Proof of Lemma \ref{lem4}.]
Let $w$ be the solution of zero-energy scattering equation,
\begin{equation}
(-\Delta + \frac{1}{2}V) w = 0, \;\; \lim_{r \rightarrow \infty} w(r)=1.
\end{equation}
Note that $w(r) > 0$ even when $V(r) < 0$. Then, by definition, $f(r):=rw(r) =r-a$ for $r>R_1$ where $r=|x|$. Let
\begin{equation}
\psi(r) = \frac{\sin \big( hf(r) \big)}{r}.
\end{equation}
Here, $h$ is the smallest positive number satisfying $\psi'( \ell_0 ) = 0$, or $h = \ell_0^{-1} \tan \big( hf(\ell_0) \big)$. It gives
\begin{equation}
h^2 = \frac{3a}{\ell_0^3}+ O(\frac{1}{\ell_0^4}).
\end{equation}

Let $g = \psi^{-1} \phi$. $g$ is well-defined, since $\psi > 0$. Then
\begin{equation}
\int_{r<\ell_0} \overline{\phi} (-\Delta + \frac{1}{2}V )\phi = \int_{r<\ell_0} |g|^2 \psi (-\Delta + \frac{1}{2}V)\psi + \int_{r<\ell_0} |\nabla g|^2 \psi^2.
\end{equation}
A calculation shows that
\begin{eqnarray}
\nonumber & & \psi (-\Delta + \frac{1}{2}V)\psi \\
&=& \frac{1}{2} V\psi^2 + \psi \Bigl[\frac{h^2 \{f'(r)\}^2 \sin \big( h f(r) \big)}{r} - \frac{h f''(r) \cos \big( h f(r) \big)}{r} \Bigr] \\
\nonumber &=& h^2 \psi^2 + h^2 \psi^2 (\{f'(r)\}^2 -1) - \frac{1}{r^2} \bigl[h f''(r) \cos \big( h f(r) \big) \sin \big( h f(r) \big) + \frac{1}{2}V \sin^2 \big( h f(r) \big) \bigr]
\end{eqnarray}
The last two terms vanish where $r>R_1$. Furthermore, in the Taylor expansion with respect to $h$, $O(h^2)$ terms cancel in the square bracket, since $-f''(r)+\frac{1}{2}V(r)f(r)=0$. Thus, we have
\begin{eqnarray}
\nonumber && \int_{r<\ell_0} \overline{\phi}( -\Delta + \frac{1}{2} V) \phi \\
&\geq& h^2 \int_{r<\ell_0} |\phi|^2 + h^2 \int_{r<\ell_0} (\{f'(r)\}^2 -1) |\phi|^2 - C h^3 \int_{r<R_1} \frac{|g|^2}{r^2} + \int_{r<\ell_0} |\nabla g|^2 \psi^2.
\end{eqnarray}

Note that $\phi$ is bounded from above, since $V_2$ is continuous, and $\phi$ is bounded from below, since $V_1$ is continuous. (See Lemma A.1 in \cite{ESY} for properties of $\phi$.) Since $f'(r)=1$ for $r>R_1$ and $f'$ is bounded, we can see that $\int ({f'(r)}^2 -1) |\phi|^2 / \int |\phi|^2 = O(\ell_0^{-1})$ as $\ell_0 \rightarrow \infty$. We also know that $\sin \big(h f(r) \big) \geq C hr$ and $f$ does not vanish. Hence, $\psi \geq C h$ and
\begin{equation}
\int_{r<\ell_0} |\nabla g|^2 \psi^2 \geq C h^2 \int_{r<\ell_0} |\nabla g|^2.
\end{equation}

Now, from the Hardy type inequality (Lemma 5.1 in \cite{ESY}),
\begin{equation}
\int_{r<\ell_0} \frac{1(|x|<R_1)}{r^2} |g|^2 \leq C \int_{r<\ell_0} |\nabla g|^2 + C \frac{R_1^3}{\ell_0^3} \int_{r<\ell_0} |g|^2,
\end{equation}
where we used
\begin{equation}
|\{x \in \mathbb{R}^3 : |x|<\ell_0 \}|^{-1} \int_{r<\ell_0} 1(|x|<R_1) = \frac{R_1^3}{\ell_0^3}
\end{equation}

Thus,
\begin{eqnarray}
\nonumber & &\int_{r<\ell_0} |\nabla g|^2 \psi^2 - C h^3 \int_{r<R_1} \frac{|g|^2}{r^2} \geq C \Bigl(h^3 \int_{r<\ell_0} |\nabla g|^2 - h^3 \int_{r<R_1} \frac{|g|^2}{r^2} \Bigr) \\
&\geq& -C h^3 \frac{R_1^3}{\ell_0^3} \int_{r<\ell_0} |g|^2.
\end{eqnarray}

Since $h^2 = O(\ell_0^{-3})$ and $\psi^2 = O(\ell_0^{-3})$, the last term is bounded by $C h^3 \int |\phi|^2$, and we get the lemma.
\end{proof}

To prove the two-particle case, let $\ell_1^3 = \rho^{-1 + \epsilon}$, $\ell_0^3 = \rho^{-1 + 5 \epsilon}$, $\epsilon < \frac{1}{31}$. This choice of $\ell_1$ guarantees $\Lambda_{\ell_1}$, a box of side length $\ell_1 \gg \ell_0$, contains almost no particles in average.

\begin{lem} \label{lem5}
Let the two-particle Hamiltonian
\begin{equation}
H^{(2)} = -\Delta_1 - \Delta_2 + V(x_1 -x_2).
\end{equation}
Let the density of particles in the thermodynamic limit $\rho = N / L^3$, and $\ell_1^3 = \rho^{-1 + \epsilon}$. Then, there exists a positive constant $\epsilon$ such that its ground state energy with Neumann boundary conditions satisfies
\begin{equation}
E(2, \Lambda_{\ell_1}, V) \geq (1-\rho^{\epsilon}) \frac{8 \pi a}{\ell_1^3}.
\end{equation}
\end{lem}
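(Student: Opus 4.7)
The plan is to separate the center of mass, invoke Lemma \ref{lem4} to replace $V$ by a bounded ``soft'' effective potential of height of order $a/\ell_0^3$ supported on the ball of radius $\ell_0$ in the relative coordinate, and then apply Lemma \ref{perturb} with the free Laplacian on $\Lambda_{\ell_1}^2$, whose constant ground state averages this potential to $8\pi a/\ell_1^3$.

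I first change to $\xi=(x_1+x_2)/2$, $\eta=x_1-x_2$ (Jacobian $1$), so that
\begin{equation}
H^{(2)}=-\tfrac{1}{2}\Delta_\xi-2\Delta_\eta+V(\eta)
\end{equation}
on $\Omega=\{(\xi,\eta):\xi\pm\eta/2\in\Lambda_{\ell_1}\}$ with Neumann BC. I fix $\mu\in(0,1)$ and split $-2\Delta_\eta=-2(1-\mu)\Delta_\eta-2\mu\Delta_\eta$. For each $\xi$ in the bulk $\Lambda_{\ell_1}^{\mathrm{b}}:=\{\xi:d(\xi,\partial\Lambda_{\ell_1})\geq\ell_0/2\}$ the relative slice $\Omega_\xi$ contains the ball $B_{\ell_0}(0)$, and since $\mathrm{supp}\,V\subset B_{R_1}\subset B_{\ell_0}$, applying Lemma \ref{lem4} to $-\Delta_\eta+V/(2\mu)$ on $B_{\ell_0}$ (whose scattering length is $a(V/\mu)=a+O(1-\mu)$) gives
\begin{equation}
\int_{\Omega_\xi}\bigl[2\mu|\nabla_\eta\psi|^2+V|\psi|^2\bigr]d\eta\ \geq\ \frac{6\mu\,a(V/\mu)}{\ell_0^3}\bigl(1+O(\ell_0^{-1})\bigr)\int_{B_{\ell_0}}|\psi|^2\,d\eta.
\end{equation}
Combining over bulk $\xi$ and controlling the boundary-$\xi$ contribution via $V\geq V_1-c_1V_2$ together with Theorem \ref{thm2}, I obtain the quadratic-form inequality $H^{(2)}\geq A+W$ on $\Omega$, where $A=-\tfrac12\Delta_\xi-2(1-\mu)\Delta_\eta$ and $W(\xi,\eta)=\frac{6\mu a}{\ell_0^3}\mathbf{1}_{\Lambda_{\ell_1}^{\mathrm{b}}}(\xi)\mathbf{1}_{B_{\ell_0}}(\eta)$, up to subleading $O(1-\mu)$ and $O(\ell_0^{-1})$ factors.

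I then apply Lemma \ref{perturb} to $A+W$ on $\Omega$. The constant $\psi_0=\ell_1^{-3}$ is the normalized ground state of $A$ with eigenvalue $0$; the Cauchy--Schwarz bound $A\geq(1-\mu)(-\Delta_1-\Delta_2)$ (obtained from $\mu\nabla_1\cdot\nabla_2\geq-\tfrac{\mu}{2}(|\nabla_1|^2+|\nabla_2|^2)$ after reverting to original coordinates) shows that the spectral gap satisfies $\gamma\geq(1-\mu)(\pi/\ell_1)^2$, of order $(1-\mu)\rho^{2/3-2\epsilon/3}$, while $\|W\|_\infty$ is of order $a\rho^{1-5\epsilon}$. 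A direct computation gives
\begin{equation}
\langle\psi_0,W\psi_0\rangle=\frac{|\Lambda_{\ell_1}^{\mathrm{b}}|\,|B_{\ell_0}|}{\ell_1^6}\cdot\frac{6\mu a}{\ell_0^3}(1+O)=\frac{8\pi a}{\ell_1^3}\bigl(1+O(\ell_0/\ell_1)+O(1-\mu)+O(\ell_0^{-1})\bigr),
\end{equation}
and the perturbative remainder $\|W\|_\infty^2/\gamma$ is of order $a^2\rho^{4/3-28\epsilon/3}/(1-\mu)$. Setting $1-\mu=\rho^\beta$, the constraints $\beta\geq\epsilon$ (to make $a(V/\mu)-a$ at most $\rho^\epsilon$) and $\beta\leq 1/3-28\epsilon/3$ (to make the perturbative remainder at most $\rho^\epsilon\cdot a/\ell_1^3$) are compatible precisely when $\epsilon<1/31$, matching the paper's threshold; all remaining $O$-errors are bounded by $\rho^{4\epsilon/3}$ or $\rho^\beta$, both within the budget.

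The principal obstacle is the simultaneous tuning of $\mu$, $\ell_0$, and $\ell_1$ so that the gap of $A$ dominates $\|W\|_\infty$, the scattering-length shift $a(V/\mu)-a$ is negligible, and the perturbative remainder is within the $\rho^\epsilon$ budget; the sharpness of this three-way balance is exactly what forces $\epsilon<1/31$. A secondary concern is the boundary-in-$\xi$ contribution where Lemma \ref{lem4} does not directly apply, which is handled via Theorem \ref{thm2} (or by a translation-average trick analogous to Section 3), and the mild scattering-length dependence $a(V/\mu)=a+O(1-\mu)$, controlled by taking $\mu$ close to $1$.
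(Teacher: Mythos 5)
Your proposal is a genuinely different technical route to the same estimate, and the core computations check out — including the recovery of the threshold $\epsilon<1/31$ — but there is one step that as written does not go through. Let me first compare, then flag the issue.

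The paper keeps the variables $(x_1,x_2)$ and writes $H^{(2)}$ as a convex combination of four pieces: $\tfrac{\rho^\epsilon}{2}(-\Delta_1-\Delta_2)$ (unperturbed), $(1-\rho^\epsilon)(-\Delta_j+\tfrac12 V)$ for $j=1,2$ (each bounded below by a soft potential via Lemma \ref{lem4}, with the non-particle variable treated as a parameter), and $\tfrac{\rho^\epsilon}{2}(-\Delta_1-\Delta_2+2V)$, which is discarded as nonnegative using Theorem \ref{thm2}. You instead pass to center-of-mass and relative coordinates, split only the relative kinetic energy $-2\Delta_\eta=-2(1-\mu)\Delta_\eta-2\mu\Delta_\eta$, and apply Lemma \ref{lem4} once to $-\Delta_\eta+\tfrac12(V/\mu)$. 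This buys you a cleaner conceptual picture (the center of mass plays no role, and a single Neumann-ball estimate produces the soft potential) and an explicit tuning parameter $\mu$, at the cost of two complications the paper avoids: (i) the domain $\Omega$ in the new variables is a polytope rather than a product of cubes, so spectral gap and boundary statements must be deduced indirectly (your operator inequality $A\geq(1-\mu)(-\Delta_1-\Delta_2)$, obtained by reverting to original coordinates, is the correct way to do this and works), and (ii) the scattering length $a(V/\mu)$ differs from $a$ by $O(1-\mu)$, which must be absorbed into the error — an extra step the paper sidesteps by multiplying the unrescaled quadratic form by $(1-\rho^\epsilon)$ instead of dividing the potential by $\mu$.

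The genuine gap is the boundary-in-$\xi$ estimate. You need, for each $\xi$ with $d(\xi,\partial\Lambda_{\ell_1})<\ell_0/2$,
\begin{equation}
\int_{\Omega_\xi}\bigl[\,2\mu|\nabla_\eta\psi|^2+V(\eta)|\psi|^2\,\bigr]\,d\eta\ \geq\ 0,
\end{equation}
and you propose to get this from Theorem \ref{thm2}. But Theorem \ref{thm2} controls the full two-body form $-\Delta_1-\Delta_2+(V_1-c_1V_2)$, which in your coordinates is $-\tfrac12\Delta_\xi-2\Delta_\eta+(V_1-c_1V_2)$; it includes a $-\tfrac12\Delta_\xi$ term that you have already allocated in full to the unperturbed operator $A$ and therefore cannot borrow here. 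The bound you actually need is a statement about $-2\mu\Delta_\eta+V$ alone on the (convex, star-shaped) slice $\Omega_\xi$. The paper's own proof handles exactly this via the radial quantity $I_s(\varphi)=\int_0^s\bigl(|\partial_r\varphi|^2+\tfrac12 V(r)|\varphi|^2\bigr)r^2\,dr$, showing that $I_s(\varphi)\geq 0$ for all $s$ by extending a putative negative-energy radial test function to a constant and contradicting Lemma \ref{lem4} on the full ball $B_{\ell_0}$; since $\Omega_\xi$ is star-shaped about the origin and angular derivatives only help, this gives nonnegativity of the slice energy. You should replace the appeal to Theorem \ref{thm2} by this extension argument (or a spherical-harmonics reduction to it). With that substitution, the rest of your proof — the gap bound $\gamma\geq(1-\mu)(\pi/\ell_1)^2$, the computation $\langle\psi_0,W\psi_0\rangle\approx 8\pi a/\ell_1^3$, and the balance $\beta=\epsilon<1/3-28\epsilon/3$ forcing $\epsilon<1/31$ — is sound and consistent with the paper.
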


\begin{proof}[Proof of Lemma \ref{lem5}.]
For two-particle bosonic function $\phi(x_1, x_2 )$ with Neumann boundary condition, we have
\begin{eqnarray}
\nonumber && \int_{\Lambda_{\ell_1}^2} dx_1 dx_2 \overline{\phi} H^{(2)} \phi \\
&=& \frac{\rho^{\epsilon}}{2} \int_{\Lambda_{\ell_1}^2} dx_1 dx_2 \overline{\phi}(-\Delta_1 - \Delta_2) \phi + (1-\rho^{\epsilon}) \int_{\Lambda_{\ell_1}^2} dx_1 dx_2 \nonumber \overline{\phi} \big(-\Delta_1 + \frac{1}{2}V(x_1 -x_2) \big) \phi \label{eqn_two}\\
&& + (1-\rho^{\epsilon}) \int_{\Lambda_{\ell_1}^2} dx_1 dx_2 \overline{\phi} \big( -\Delta_2 + \frac{1}{2}V(x_1 -x_2) \big) \phi \\
\nonumber && + \frac{\rho^{\epsilon}}{2} \int_{\Lambda_{\ell_1}^2} dx_1 dx_2 \overline{\phi}\big((-\Delta_1 - \Delta_2) + 2V(x_1-x_2) \big)\phi.
\end{eqnarray}

Let $\widetilde{\Lambda_{\ell_1}} := \{ p \in \Lambda_{\ell_1} : d(p, \partial \Lambda_{\ell_1} ) > \ell_0 \}$, where $d(p, \partial \Lambda_{\ell_1} )$ denotes the distance between a point $p$ and $\partial \Lambda_{\ell_1}$, the boundary of $\Lambda_{\ell_1}$. For $j=1, 2$, we want to show that
\begin{eqnarray}
&& \int_{\Lambda_{\ell_1}^2} dx_1 dx_2 \overline{\phi} \big( -\Delta_j + \frac{1}{2}V(x_1 -x_2) \big) \phi \label{H_1 estimate}\\
&\geq& \frac{3a}{\ell_0^3}\Big(1+O(\frac{1}{\ell_0})\Big) \int_{\Lambda_{\ell_1}^2} dx_1 dx_2 1(|x_1 -x_2 | < \ell_0) 1(x_1 \in \widetilde{\Lambda_{\ell_1}}) 1(x_2 \in \widetilde{\Lambda_{\ell_1}}) |\phi|^2, \nonumber
\end{eqnarray}
where $\ell_0^3 = \rho^{-1 + 5 \epsilon}$.

To see this, consider the case $j=1$ and let $x_2$ be fixed. If $d(x_2, \partial \Lambda_{\ell_1} ) > \ell_0$, then Lemma \ref{lem4} shows that
\begin{equation}
\int_{|x_1 - x_2|<\ell_0} dx_1 \overline{\phi} \big( -\Delta_1 + \frac{1}{2}V(x_1 -x_2) \big) \phi \geq \frac{3a}{\ell_0^3}\Big(1+O(\frac{1}{\ell_0})\Big) \int_{|x_1 - x_2|<\ell_0} dx_1 |\phi|^2.
\end{equation}

When $d(x_2, \partial \Lambda_{\ell_1} ) \leq \ell_0$, we only need to show that
\begin{equation}
\int_{\Lambda_{\ell_1}} dx_1 \overline{\phi} \big( -\Delta_1 + \frac{1}{2}V(x_1 -x_2) \big) \phi \cdot 1(|x_1 - x_2|<\ell_0) \geq 0.
\end{equation}
To prove this, we let, for $\varphi : (0, \infty) \to \mathbb{C}$ satisfying $\int_0^{\infty} r^2 |\varphi(r)|^2 dr < \infty$,
\begin{equation}
I_s (\varphi) := \int_0^s dr \Big( |\frac{\partial \varphi}{\partial r}|^2 + \frac{1}{2} V(r) |\varphi|^2 \Big) r^2.
\end{equation}
By the assumptions \eqref{v1} and \eqref{v2}, we can immediately see that $I_s (\varphi) \geq 0$ if $s < R_0$. Suppose that $I_s (\varphi) < 0$ for some $s \geq R_0$ and $\varphi$. We can assume that $\frac{d \varphi}{dr} (s) = 0$, since we can adjust $\varphi(r)$ near $r=s$ so that $\frac{d \varphi}{dr} (s) = 0$ and $I_s (\varphi) < 0$. Consider a function $\zeta \in L^2 (\mathbb{R}^3)$ defined by
\begin{equation}
\zeta (x) = 
	\begin{cases}
	\varphi(|x|) & \text{if} \;\;\; |x| < s \\
	\varphi(s) & \text{if} \;\;\; |x| \geq s
	\end{cases}.
\end{equation}
Then, since $V(x) \leq 0$ when $|x| \geq R_0$ by \eqref{v1} and \eqref{v2},
\begin{equation}
\int_{B(0, \ell_0)} \overline{\zeta} (-\Delta + \frac{1}{2} V ) \zeta < 0,
\end{equation}
which contradicts Lemma \ref{lem4}. This proves \eqref{H_1 estimate}.

From Lemma \ref{lem1}, we can see that the last term in (\ref{eqn_two}) is non-negative, since
\begin{eqnarray}
\nonumber && \int_{\Lambda_{\ell_1}^2} dx_1 dx_2 \overline{\phi}\big((-\Delta_1 - \Delta_2) + 2V(x_1-x_2) \big) \phi \\
&\geq& \int_{\Lambda_{\ell_1}^2} dx_1 dx_2 \overline{\phi}\big((-\Delta_1 - \Delta_2) + (V_1 - c_1 V_2)(x_1-x_2) \big) \phi \geq 0.
\end{eqnarray}
This proves
\begin{equation}
H^{(2)} \geq H_0 + H_1
\end{equation}
where
\begin{equation}
H_0 = \frac{\rho^{\epsilon}}{2} (-\Delta_1 - \Delta_2)
\end{equation}
and
\begin{equation}
H_1 = \frac{6a}{\ell_0^3} \Big(1+O(\frac{1}{\ell_0})\Big) (1-\rho^{\epsilon}) 1(|x_1 - x_2 | \leq \ell_0) 1(x_1 \in \widetilde{\Lambda_{\ell_1}}) 1(x_2 \in \widetilde{\Lambda_{\ell_1}}).
\end{equation}

We let $H_0$ be the unperturbed part and $H_1$ the perturbation in $H_0 + H_1$. Lemma \ref{perturb} tells that the ground state energy of $H^{(2)}$ satisfies
\begin{equation} \label{eqn_per}
E(2, \Lambda_{\ell_1}, V) \geq \int_{\Lambda_{\ell_1}^2} dx_1 dx_2 \overline{\psi_0} H_1 \psi_0 - \frac {2 \|H_1\|_{\infty}^2}{\gamma}
\end{equation}
where $\psi_0$ is the ground state of $H_0$, and $\gamma = C \rho^{\epsilon} \ell_1^{-2}$ is the spectral gap of $H_0$. We need the condition $\gamma \geq \|H_1\|_{\infty}$ and it indeed is true, since
\begin{equation}
\gamma = C \rho^{\epsilon} \ell_1^{-2} \gg \frac{6a}{\ell_0^3}
\end{equation}
if $\epsilon < \frac{1}{16}$.

Letting $\psi_0 = \ell_1^{-3}$,
\begin{eqnarray}
\nonumber E(2, \Lambda_{\ell_1}, V) &\geq& \frac{6a}{\ell_0^3} (1- \rho^{\epsilon}) \int_{\widetilde{\Lambda_{\ell_1}}^2} dx_1 dx_2 \ell_1^{-6} 1(|x_1 -x_2 | \leq \ell_0 ) - \frac{2 \|H_1 \|_{\infty}^2}{\gamma} \\
&\geq& \frac{6a}{\ell_0^3}(1-\rho^{\epsilon}) \frac{4 \pi \ell_0^3}{3} \frac{1}{\ell_1^3} = (1-\rho^{\epsilon}) \frac{8 \pi a }{\ell_1^3}
\end{eqnarray}
up to higher order terms of $\rho$. Note that
\begin{equation}
\frac{\|H_1 \|_{\infty}^2}{\gamma} \ll \frac{\rho^{\epsilon}}{\ell_1^3}
\end{equation}
if $\epsilon < \frac{1}{31}$. This proves Lemma \ref{lem5}.
\end{proof}

To prove Theorem \ref{thm1}, we divide $\mathbb{T}$ into small boxes of side length $\ell_2$, each of which is subdivided into smaller boxes of side length $\ell_1$. Let $\kappa = \rho^{-\epsilon}, \ell_2 = \ell_1 \kappa = \rho^{-\frac{1}{3} - \frac{2}{3} \epsilon}$. This means $\kappa^2 = N (\ell_2^3 / L^3 )$, i.e., the typical number of particles in $\Lambda_{\ell_2}$, a box of side length $\ell_2$, is $\kappa^2$. When the number of particles in a box is not too large, we can get the following lower bound.

\begin{lem} \label{lem6}
Let $V = V_1 - \lambda V_2$ and $a_1$ be the scattering length of $V_1$. Suppose that $k \leq M = \frac{a}{a_1} \cdot 8 \kappa^2 +1$. Given the Hamiltonian of $k$-particle system in a box of side length $\ell_2$
\begin{equation}
H^{(k)} = -\sum_{j=1}^k \Delta_j + \sum_{i<j}^k V(x_i -x_j), \label{hk}
\end{equation}
there exist positive constants $C'$ and $\epsilon$ such that, the ground state energy of (\ref{hk}) with Neumann boundary conditions,
\begin{equation}
E(k, \Lambda_{\ell_2}, V) \geq \frac{4 \pi a}{\ell_2^3} k(k-1) (1- C' \rho^{\epsilon}). \label{Ek}
\end{equation}
\end{lem}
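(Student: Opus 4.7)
The plan is to follow Step 4 of the outline: subdivide $\Lambda_{\ell_2}$ into $\ell_1$-cells via the change-of-origin averaging of Theorem \ref{thm2}, reduce the resulting localized pair interaction to a multiplication operator that counts particle pairs within distance $\ell_0$, and then invoke Lemma \ref{perturb} with the constant function as the unperturbed ground state, exactly as in Lemma \ref{lem5}. Begin with the decomposition
\[
H^{(k)}=\underbrace{\tfrac{\rho^\epsilon}{2}\sum_{j=1}^k(-\Delta_j)}_{H_0}+\underbrace{(1-\rho^\epsilon)\Bigl[\sum_{j=1}^k(-\Delta_j)+\sum_{i<j}^k V(x_i-x_j)\Bigr]}_{H_{\text{main}}}+\underbrace{\tfrac{\rho^\epsilon}{2}\Bigl[\sum_{j=1}^k(-\Delta_j)+2\sum_{i<j}^k V(x_i-x_j)\Bigr]}_{H_{\text{stab}}}.
\]
Since $2V=V_1+(V_1-c_1V_2)$ with $V_1\geq 0$ pointwise, $H_{\text{stab}}$ dominates $\frac{\rho^\epsilon}{2}\bigl[\sum_j(-\Delta_j)+\sum_{i<j}(V_1-c_1V_2)(x_i-x_j)\bigr]$, which is $\geq 0$ by Theorem \ref{thm2}. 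The role of $H_0$ will be to provide the spectral gap needed for Lemma \ref{perturb}.

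For $H_{\text{main}}$ I apply the $\ell_1$-scale averaging of Theorem \ref{thm2}: as in that proof one obtains $V\geq V'h_{\ell_1}$ for $V'=V_1-\lambda' V_2$ with $\lambda'=\lambda+O(R_1/\ell_1)$ and scattering length $a'=a(1+o(1))$, and the kinetic plus localized potential splits into a sum over $\ell_1$-subcells as in \eqref{potential}. In each subcell $\sigma$ with $n_\sigma$ particles I estimate: $n_\sigma\in\{0,1\}$ gives a nonnegative contribution trivially; $n_\sigma\geq 3$ gives a nonnegative contribution by Theorem \ref{thm2} (using $V'\geq V_1-c_1V_2$ pointwise); and $n_\sigma=2$ gives, via the pointwise pair estimate \eqref{H_1 estimate} from the proof of Lemma \ref{lem5}, at least $\frac{6a}{\ell_0^3}(1+O(\ell_0^{-1}))(1-\rho^\epsilon)\mathbf{1}(|x_i-x_j|<\ell_0)\mathbf{1}(x_i,x_j\in\widetilde{\Lambda_{\ell_1}})$ on the pair. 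Performing the sum over $\sigma$ and the integral over the shift $u$ yields the pointwise lower bound
\[
H_{\text{main}}\geq H_1:=\frac{6a}{\ell_0^3}\bigl(1+O(\rho^\epsilon)\bigr)\sum_{i<j}^k\mathbf{1}(|x_i-x_j|<\ell_0)\,\mathbf{1}(x_i,x_j\in\widetilde{\Lambda_{\ell_2}}),
\]
where the $O(\rho^\epsilon)$ absorbs both the $\ell_0/\ell_1=\rho^{4\epsilon/3}$ loss coming from close pairs that straddle a subcell wall (or its $\ell_0$-interior boundary) and the probability $O(k\ell_1^3/\ell_2^3)=O(\rho^\epsilon)$ that a third particle shares a subcell with a close pair (forcing the $n_\sigma\geq 3$ Theorem \ref{thm2} bound rather than Lemma \ref{lem5}).

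Finally, apply Lemma \ref{perturb} to $H_0+H_1$ on $\Lambda_{\ell_2}^k$ with Neumann boundary conditions: the constant $\psi_0=\ell_2^{-3k/2}$ is the simple ground state of $H_0$ with eigenvalue $0$ and spectral gap $\gamma\sim\rho^\epsilon/\ell_2^2$, and direct integration gives
\[
\langle\psi_0,H_1\psi_0\rangle=\binom{k}{2}\cdot\frac{6a}{\ell_0^3}\cdot\frac{4\pi\ell_0^3/3}{\ell_2^3}\bigl(1+O(\rho^\epsilon)\bigr)=k(k-1)\cdot\frac{4\pi a}{\ell_2^3}\bigl(1+O(\rho^\epsilon)\bigr),
\]
which is the main term of \eqref{Ek}. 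The applicability condition $\gamma\geq 4\|H_1\|_\infty$ and the smallness of $2\|H_1\|_\infty^2/\gamma$ relative to the main term both follow from $\|H_1\|_\infty\leq Ck^2a/\ell_0^3\leq C\kappa^4 a/\ell_0^3$ (using $k\leq M=O(\kappa^2)$) and a short parameter count in powers of $\rho^\epsilon$ using $\kappa=\rho^{-\epsilon}$, $\ell_0=\rho^{-1/3+5\epsilon/3}$, $\ell_2=\rho^{-1/3-2\epsilon/3}$; both inequalities come out favourably provided $\epsilon<1/31$. I expect the main obstacle to be the second paragraph: carefully assembling the per-configuration, per-subcell Lemma \ref{lem5} bounds into the single pointwise pair operator $H_1$ on $\Lambda_{\ell_2}^k$, in particular upgrading the subcell indicator $\widetilde{\Lambda_{\ell_1}}$ to the global $\widetilde{\Lambda_{\ell_2}}$ and replacing ``pair alone in its subcell'' by the unconstrained $\mathbf{1}(|x_i-x_j|<\ell_0)$, each at only an $O(\rho^\epsilon)$ cost.
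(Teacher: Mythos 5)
Your plan follows the same overall strategy as the paper (stabilizing piece via Theorem \ref{thm2}, a spectral‑gap kinetic piece, $\ell_1$‑subcell averaging, Lemma \ref{perturb} with the constant function), and the algebraic decomposition differs only in bookkeeping: you factor $(1-\rho^\epsilon)$ outside $H_{\mathrm{main}}$ and then enlarge $\lambda$ to $\lambda'$ to swallow the $h_{\ell_1}$ loss, whereas the paper tilts the coefficients (its Eq.\ \eqref{subdivision}) so that $(1-\rho^\epsilon)\lambda V_2\le\lambda V_2 h_{\ell_1}$ directly and Lemma \ref{lem5} can be used verbatim, avoiding a change of scattering length. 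That much is fine.

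The genuine gap is the claimed \emph{pointwise} operator inequality $H_{\mathrm{main}}\ge H_1$ with
$H_1\sim\frac{6a}{\ell_0^3}\sum_{i<j}\mathbf 1(|x_i-x_j|<\ell_0)\mathbf 1(x_i,x_j\in\widetilde{\Lambda_{\ell_2}})$.
After subdividing and averaging over the shift $u$, what the subcell argument actually produces is the multiplication operator
$\int_\Gamma du\sum_{\beta\in P(u)}q(N_\beta(X_k))$, which vanishes on every subcell with $n_\sigma\ge3$. For a configuration $X_k$ in which, say, three particles lie within $\ell_0$ of one another, they lie in a single $\ell_1$‑cell for all but an $O(\ell_0/\ell_1)$‑fraction of shifts, so $\int du\sum_\beta q(N_\beta)$ is $O(\ell_0/\ell_1)\cdot q(2)$ on that configuration, while your $H_1$ evaluates to roughly $3\cdot\frac{6a}{\ell_0^3}$ there. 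The pointwise bound therefore fails on such clustered configurations; it cannot be repaired by the remark that the ``probability that a third particle shares a subcell'' is $O(\rho^\epsilon)$, because that probability is a statement about the $|\psi_0|^2$‑average over $X_k$, not a per‑configuration bound, and Lemma \ref{perturb} needs the operator inequality to hold pointwise before the expectation is taken.

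The fix — and what the paper does — is not to simplify the multiplication operator but to apply Lemma \ref{perturb} directly with $X=\int du\, W_u$, $W_u(X_k)=\sum_{\beta\in P(u)}q(N_\beta(X_k))$, and only then evaluate $\langle\psi_0,X\psi_0\rangle$ via inclusion–exclusion: the pair term gives the main contribution $\sim\frac{4\pi a}{\ell_2^3}k(k-1)$ and the triple‑occupancy subtraction is where the $O(k\ell_1^3/\ell_2^3)=O(\rho^\epsilon)$ relative loss shows up. As a side effect, this also controls the perturbation size correctly: $\|W_u\|_\infty\le\frac{k}{2}q(2)\sim k/\ell_1^3$, whereas your $\|H_1\|_\infty\le Ck^2/\ell_0^3$ is larger by a factor $k\ell_1^3/\ell_0^3\sim\rho^{-6\epsilon}$, which would visibly tighten (and potentially break) the $\epsilon$ constraints in the $\gamma\ge4\|X\|_\infty$ and error‑term checks. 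So keep $W_u$ as the perturbation, verify $\gamma\gtrsim\rho^\epsilon\ell_2^{-2}\gg\|W_u\|_\infty$, and do the inclusion–exclusion on $\langle\psi_0,\int du\,W_u\psi_0\rangle$; that yields \eqref{Ek} cleanly.
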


\begin{proof}[Proof of Lemma \ref{lem6}.]
We have
\begin{eqnarray} \label{subdivision}
&& (1+5\rho^{\epsilon}) \Big[ (1-\rho^{\epsilon}) \sum_{j=1}^k \int_{\Lambda_{\ell_2}^k} dX_k |\nabla_j \psi(X_k)|^2 + \sum_{i<j}^k \int_{\Lambda_{\ell_2}^k} dX_k V(x_i -x_j) |\psi (X_k)|^2 \Big] \nonumber \\
&\geq& \Big[ \sum_{j=1}^k \int_{\Lambda_{\ell_2}^k} dX_k |\nabla_j \psi(X_k)|^2 + \sum_{i<j}^k \int_{\Lambda_{\ell_2}^k} dX_k (V_1 - (1-\rho^{\epsilon}) \lambda V_2 ) (x_i -x_j) |\psi (X_k)|^2 \Big] \\
&&+ 3\rho^{\epsilon} \Big[ \sum_{j=1}^k \int_{\Lambda_{\ell_2}^k} dX_k |\nabla_j \psi(X_k)|^2 + \sum_{i<j}^k \int_{\Lambda_{\ell_2}^k} dX_k (V_1 - 2\lambda V_2 ) (x_i -x_j) |\psi (X_k)|^2 \Big]. \nonumber
\end{eqnarray}

From Theorem \ref{thm2}, we know that the second term in the right hand side of (\ref{subdivision}) is non-negative. To estimate the first term in the right hand side of \eqref{subdivision}, subdivide $\Lambda_{\ell_2}$ into smaller boxes of side length $\ell_1$ as in the proof of Theorem \ref{thm2}: Consider a set $G = \{ p \in \mathbb{Z}^3 | \ell_1 p \in \Lambda_{\ell_2} \}$. Recall that $\chi$ is a characteristic function of the unit cube. We let $u \in [0, 1]^3 = \Gamma$ be the origin of the subdivision and, for $\lambda \in G$, $\chi_{\lambda u} (x) = \chi(x+\lambda + u)$. Let
\begin{equation}
\Lambda_{\lambda u} = \{x \in \Lambda_{\ell_2} | \chi_{\lambda u} (x/\ell_1) = 1 \}.
\end{equation}
$\Lambda_{\lambda u}$ is typically a box of side length of $\ell_1$. Note that some of the small boxes $\Lambda_{\lambda u}$ are smaller than others, in the sense that they are not the cubes of side length $\ell_1$. We want to consider the small boxes of side length $\ell_1$ only, which correspond to the small boxes in the subdivision that do not have walls in them in the proof of theorem \ref{thm2}. More precisely, we let
\begin{equation}
P(u) = \{ \Lambda_{\lambda u} : |\Lambda_{\lambda u}| = \ell_1^3 \}.
\end{equation}

Let
\begin{equation}
q(n)=
\begin{cases}
\displaystyle \frac{1-\rho^{\epsilon}}{1+5\rho^{\epsilon}} \cdot \frac{8 \pi a}{\ell_1^3} & \text{if}\;\; n=2 \\
0 & \text{otherwise} \\
\end{cases}.
\end{equation}
For a box of side length $\ell_1$, $\beta \in P(u)$, let
\begin{equation}
N_{\beta}(X_k) = \sum_{j=1}^k 1(x_j \in \beta).
\end{equation}
When $N_{\beta}(X_k) \neq 2$, Theorem \ref{thm2} shows that $E(N_{\beta}(X_k), \Lambda_{\ell_1} ,V) \geq 0$, and when $N_{\beta}(X_k)=2$, Lemma \ref{lem5} shows that $E(N_{\beta}(X_k), \Lambda_{\ell_1} ,V) \geq (1-\rho^{\epsilon}) \frac{8 \pi a}{\ell_1^3}$. Since $1-\rho^{\epsilon} \leq 1- \frac{\sqrt 3 R_1}{\ell_1} \leq h_{\ell_1}$,
\begin{eqnarray}
&& \sum_{j=1}^k \int_{\Lambda_{\ell_2}^k} dX_k |\nabla_j \psi(X_k)|^2 + \sum_{i<j}^k \int_{\Lambda_{\ell_2}^k} dX_k (V_1 - (1-\rho^{\epsilon}) \lambda V_2 ) (x_i -x_j) |\psi (X_k)|^2 \nonumber \\
&\geq& (1+5\rho^{\epsilon}) \int_{\Gamma} du \int_{\Lambda_{\ell_2}^k} dX_k \sum_{\beta \in P(u)} q(N_{\beta}(X_k)) |\psi (X_k)|^2.
\end{eqnarray}
Note that we could neglect small boxes that are not in $P(u)$, since for each of them the energy of the particles in it is non-negative as we have seen in Theorem \ref{thm2}. Together with \eqref{subdivision}, this shows that
\begin{equation}
H^{(k)} \geq H_1 = \rho^{\epsilon} \sum_{j=1}^k (-\Delta_j) + \int_{\Gamma} du \sum_{\beta \in P(u)} q(N_{\beta}(X_k))
\end{equation}
with Neumann boundary conditions.

For given $u \in [0, 1]^3$, to estimate $H_1$, we let 
\begin{equation}
W_u (X_k) = \sum_{\beta \in P(u)} q(N_{\beta}(X_k))
\end{equation}
be a perturbation, which depends on $u$. Since $W_u \leq \frac{k}{2} q(2)$ for all $u$ and all $X_k$, we have that $\gamma$, the spectral gap of the unperturbed Hamiltonian, $ \rho^{\epsilon} \sum_{j=1}^k (-\Delta_j)$, satisfies
\begin{equation}
\gamma = C \rho^{\epsilon} \ell_2^{-2} \gg C \kappa^2 \ell_1^{-3} \geq \|W_u\|_{\infty}
\end{equation}
if $\epsilon < \frac{1}{16}$. From Lemma \ref{perturb}, if we use the constant function $\ell_2^{-\frac{3}{2}k}$, the ground state of the unperturbed Hamiltonian, then
\begin{equation} \label{eqn_per2}
E(k, \Lambda_{\ell_2}, V) \geq \int_{\Gamma} du \Big( \ell_2^{-3k} \int_{\Lambda_{\ell_2}^k} dX_k W_u (X_k) - C \frac{\| W_u \|_{\infty}^2}{\rho^{\epsilon} \ell_2^{-2}} \Big).
\end{equation}

We compute the term inside the parenthesis in the right-hand side of \eqref{eqn_per2} for given $u$. To calculate the first term, we need to count the cases when exactly two particles are in given $\beta$. We use the inclusion-exclusion principle to get
\begin{eqnarray}
\nonumber && \ell_2^{-3k} \int_{\Lambda_{\ell_2}^k} dX_k W_u (X_k) \\
&\geq& \ell_2^{-3k} \int_{\Lambda_{\ell_2}^k} dX_k \sum_{\beta} \sum_{i_1<i_2}^k 1(x_{i_1} \in \beta) 1(x_{i_2} \in \beta) \frac{1- \rho^{\epsilon}}{1+ 5 \rho^{\epsilon}} \frac{8 \pi a}{\ell_1^3} \\
\nonumber && -3\; \ell_2^{-3k} \int_{\Lambda_{\ell_2}^k} dX_k \sum_{\beta} \sum_{i_1<i_2<i_3}^k 1(x_{i_1} \in \beta) 1(x_{i_2} \in \beta) 1(x_{i_3} \in \beta) \frac{1- \rho^{\epsilon}}{1+ 5 \rho^{\epsilon}} \frac{8 \pi a}{\ell_1^3}.
\end{eqnarray}
An explicit calculation shows
\begin{eqnarray}
&& \nonumber \ell_2^{-3k} \int_{\Lambda_{\ell_2}^k} dX_k W_u (X_k) \nonumber \\
&\geq& \frac{1- \rho^{\epsilon}}{1+ 5 \rho^{\epsilon}} \frac{8 \pi a}{\ell_1^3} \frac{k(k-1)}{2} \frac{(\ell_2 - 2\ell_1 )^3}{\ell_2^3} \frac{\ell_1^3}{\ell_2^3} - 3 \frac{1- \rho^{\epsilon}}{1+ 5 \rho^{\epsilon}} \frac{8 \pi a}{\ell_1^3} \frac{k(k-1)(k-2)}{6} \Big(\frac{\ell_1^3}{\ell_2^3} \Big)^2 \\
&\geq& \frac{4 \pi a}{\ell_2^3} k(k-1) (1- C \kappa^{-1}). \nonumber
\end{eqnarray}

We want the last term in \eqref{eqn_per2} is smaller than the error term in \eqref{Ek}, $C' \frac{4 \pi a}{{\ell_2^3}} k^2 \rho^{\epsilon}$, and this holds if $\epsilon < \frac{1}{22}$, since
\begin{equation}
C \frac{\|W\|_{\infty}^2}{\rho^{\epsilon} {\ell_2^{-2}}} \leq C \frac{k^2 \ell_2^2}{\rho^{\epsilon} \ell_1^6} \ll \frac{k^2 \kappa^{-1}}{\ell_2^3},
\end{equation}
provided $\epsilon < \frac{1}{22}$. Note that $a$ does not depend on $\rho$ hence could be absorbed into the constant $C$. This proves the lemma.
\end{proof}

When $k> M$, from Theorem \ref{thm2}, we have, since $V= V_1 - \frac{1}{2} c_1 V_2 = V_1 - \lambda V_2$,
\begin{eqnarray}
\nonumber && \sum_{j=1}^k \int_{\Lambda_{\ell_2}^k} dX_k |\nabla_j \psi(X_k)|^2 + \sum_{i<j}^k \int_{\Lambda_{\ell_2}^k} dX_k \; V(x_i-x_j) |\psi(X_k)|^2 \\
\nonumber &=& \frac{1}{2} \Big[ \sum_{j=1}^k \int_{\Lambda_{\ell_2}^k} dX_k |\nabla_j \psi(X_k)|^2 + \sum_{i<j}^k \int_{\Lambda_{\ell_2}^k} dX_k \; V_1(x_i-x_j) |\psi(X_k)|^2 \Big] \\
&& + \frac{1}{2} \Big[ \sum_{j=1}^k \int_{\Lambda_{\ell_2}^k} dX_k |\nabla_j \psi(X_k)|^2 + \sum_{i<j}^k \int_{\Lambda_{\ell_2}^k} dX_k \big( (V_1-c_1 V_2)(x_i-x_j) \big) |\psi(X_k)|^2 \Big] \label{k large 1}\\
\nonumber &\geq& \frac{1}{2} \Big[ \sum_{j=1}^k \int_{\Lambda_{\ell_2}^k} dX_k |\nabla_j \psi(X_k)|^2 + \sum_{i<j}^k \int_{\Lambda_{\ell_2}^k} dX_k \; V_1(x_i-x_j) |\psi(X_k)|^2 \Big] \\
\nonumber &\geq& \frac{1}{2} E(k, \Lambda_{\ell_2}, V_1 )
\end{eqnarray}

\vskip10pt

Now we prove the main theorem.

\begin{proof}[Proof of Theorem \ref{thm1}]
Define $\lambda$ by (\ref{delta}), (\ref{c1}), and (\ref{lambda}). We have
\begin{eqnarray}
\nonumber && \big(1+\frac{\sqrt 3 R_1}{\ell_2} \big) E(N, \mathbb{T}, V) \\
&=& \big(1+\frac{\sqrt 3 R_1}{\ell_2} \big) \inf_{\|\psi\|_2 =1} \Big[ \sum_{j=1}^N \int_{\mathbb{T}^N} dX_N |\nabla_j \psi(X_N)|^2 + \sum_{i<j}^N \int_{\mathbb{T}^N} dX_N V(x_i-x_j)|\psi(X_N)|^2 \Big] \\
\nonumber &=& \inf_{\|\psi\|_2 =1} \Big[ \sum_{j=1}^N \int_{\mathbb{T}^N} dX_N |\nabla_j \psi(X_N)|^2 + \sum_{i<j}^N \int_{\mathbb{T}^N} dX_N \big(V_1 - (1-\frac{\sqrt 3 R_1}{\ell_2}) \lambda V_2)(x_i-x_j) \big)|\psi(X_N)|^2 \\
\nonumber &&+\frac{\sqrt 3 R_1}{\ell_2} \Big( \sum_{j=1}^N \int_{\mathbb{T}^N} dX_N |\nabla_j \psi(X_N)|^2 + \sum_{i<j}^N \int_{\mathbb{T}^N} dX_N \big(V_1 - 2\lambda V_2) \big)|\psi(X_N)|^2 \Big) \Big].
\end{eqnarray}
Here, $R_1$ denotes the range of $V$. Theorem \ref{thm2} shows the last term in this equation satisfies
\begin{equation}
\sum_{j=1}^N \int_{\mathbb{T}^N} dX_N |\nabla_j \psi(X_N)|^2 + \sum_{i<j}^N \int_{\mathbb{T}^N} dX_N \big(V_1 - 2\lambda V_2) \big)|\psi(X_N)|^2 \geq 0.
\end{equation}
Hence,
\begin{equation}
\big(1+\frac{\sqrt 3 R_1}{\ell_2} \big) E(N, \mathbb{T}, V) \geq E\Big(N, \mathbb{T}, V_1 - (1-\frac{\sqrt 3 R_1}{\ell_2}) \lambda V_2 \Big) \geq E \Big( N, \mathbb{T}, \big(V_1 - \lambda V_2 ) \big)h_{\ell_2} \Big), \label{step1}
\end{equation}
where $h_{\ell_2}$ is defined by (\ref{hl}) using $\ell_2$ instead of $\ell$. Here, the second inequality follows from $1 \geq h_{\ell_2} \geq (1-\frac{\sqrt 3 R_1}{\ell_2})$. Subdivide $\mathbb{T}$ into smaller boxes of side length $\ell_2$ as in the proof of Theorem \ref{thm2} and index them by $\beta$. From Lemma \ref{lem7},
\begin{equation}
E(N, \mathbb{T}, Vh_{\ell_2}) \geq \inf_{\sum k_{\beta} =N} \sum_{\beta} E(k_{\beta}, \Lambda_{\ell_2}, V). \label{step2}
\end{equation}
Here, $k_{\beta}$ denotes the number of particles in $\beta$.

From Lemma \ref{lem6}, if $k \leq M$, we have
\begin{equation}
E(k, \Lambda_{\ell_2}, V) \geq \frac{4 \pi a}{\ell_2 ^3} k(k-1) (1- C' \rho^{\epsilon} ). \label{k small}
\end{equation}
From the superadditivity (\ref{eqn1}), we get, if $k_{\beta} > M$,
\begin{eqnarray}
\nonumber E(k_{\beta}, \Lambda_{\ell_2}, V_1 ) &\geq& \lfloor \frac{k_{\beta}}{M} \rfloor E(M, \Lambda_{\ell_2}, V_1) \geq \frac{k_{\beta}}{2M} E(M, \Lambda_{\ell_2}, V_1) \\
&\geq& \frac{4 \pi a_1}{\ell_2 ^3} (1- C' \rho^{\epsilon}) \frac{k_{\beta}}{2M} M(M-1) = \frac{4 \pi a}{\ell_2 ^3} (1- C' \rho^{\epsilon}) k_{\beta} \cdot 4 \kappa^2, \label{k large 2}
\end{eqnarray}
where $\lfloor \frac{k_{\beta}}{M} \rfloor$ denotes the greatest integer that does not exceed $\frac{k_{\beta}}{M}$.

Now we can actually calculate the lower bound using the argument similar to \cite{LY1}. From (\ref{k large 1}), (\ref{k small}), and (\ref{k large 2}), we get
\begin{equation}
\inf_{\sum k_{\beta} = N} \sum_{\beta} E(k_{\beta}, \Lambda_{\ell_2}, V) \geq \frac{4 \pi a}{\ell_2 ^3} (1- C' \rho^{\epsilon}) \inf_{\sum k_{\beta} = N} \Big[ \sum_{\beta: k_{\beta} \leq M} k_{\beta}(k_{\beta}-1) + \sum_{\beta: k_{\beta} > M} \frac{1}{2} k_{\beta} \cdot 4 \kappa^2 \Big]. \label{step3}
\end{equation}
Let $t= \sum_{k_{\beta} \leq M} k_{\beta}$. Then,
\begin{equation}
\sum_{\beta: k_{\beta} \leq M} k_{\beta}^2 \geq (\sum_{\beta: k_{\beta} \leq M} k_{\beta})^2 / (\sum_{\beta: k_{\beta} \leq M}1 ) \geq t^2 / (\frac{L^3}{\ell_2^3} ),
\end{equation}
hence $\sum k_{\beta} = N$ implies
\begin{equation}
\sum_{\beta: k_{\beta} \leq M} k_{\beta} (k_{\beta}-1) + \sum_{\beta: k_{\beta} > M} \frac{1}{2} k_{\beta} \cdot 4 \kappa^2 \geq \frac{\ell_2^3}{L^3} t^2 - t + 2 \kappa^2 (N-t) = \frac{\ell_2^3}{L^3} (t-N)^2 + N \kappa^2 -t. \label{min}
\end{equation}
Since $t \leq N$, the minimum of the right-hand side of (\ref{min}) is attained when $t=N$. Therefore, from (\ref{step1}), (\ref{step2}), (\ref{step3}), and (\ref{min}), there exists $C_0$ such that
\begin{equation}
E(N, \mathbb{T}, V) \geq (1 + \frac{\sqrt{3} R_1}{\ell_2} )^{-1} ( \frac{4 \pi a}{\ell_2^3}) (1- C' \rho^{\epsilon}) N (\kappa^2 -1) \geq 4 \pi a \rho N( 1- C_0 \rho^{\epsilon}),
\end{equation}
which was to be proved.
\end{proof}

\vskip20pt

\section*{Acknowledgements}
I am grateful to H. -T. Yau for
stimulating my interest in this problem, and for many helpful discussions.

\vskip30pt

\end{document}